\newtheorem{condition}{Condition}
\newcommand{\rb}[1]{\raisebox{1.3ex}[0pt]{#1}}
\newcommand{\rmnum}[1]{\romannumeral #1}
\newcommand{\Rmnum}[1]{\expandafter\@slowromancap\romannumeral #1@}
\journalname{Designs, Codes and Cryptography}
\begin{document}

\title{\mbox{C-Codes}: Cyclic \mbox{Lowest-Density} MDS Array Codes Constructed Using Starters for \mbox{RAID 6}\thanks{This paper is a revised and expanded version of the material that appeared at the 2010 IEEE International Symposium on Information Theory, Austin, TX, June 2010 (see \cite{cyclic-code-starter}). The main part of this work was carried out while Mingqiang Li was a Ph.D. student in the Department of Computer Science and Technology, Tsinghua University. This work was supported by the National Science Foundation for Distinguished Young Scholars of China (Grant No. 60925006) and the National High Technology Research and Development Program of China (Grant No. 2009AA01A403).}
}


\author{Mingqiang Li \and Jiwu Shu}


\institute{Mingqiang Li \at
              IBM China Research Laboratory, Diamond Building 19-A, Zhongguancun Software Park, Dongbeiwang West Road
No. 8, Shangdi, Haidian District, Beijing 100193, P.R.China \\
              Tel.: +86-10-58748904\\
              Fax: +86-10-58748230\\
              \email{mingqiangli@cn.ibm.com}           
           \and
           Jiwu Shu \at
              Department of Computer Science and Technology, Tsinghua University, Qinghuayuan No. 1, Haidian District,
Beijing 100084, P.R.China\\
              Tel.: +86-10-62795215\\
              Fax: +86-10-62771138\\
              \email{shujw@tsinghua.edu.cn}           
}

\date{Received: date / Accepted: date}

\maketitle

\begin{abstract}
The \mbox{distance-3} cyclic \mbox{lowest-density} MDS array code (called the \mbox{C-Code}) is a good candidate for RAID 6 because of its optimal storage efficiency, optimal update complexity, optimal length, and cyclic symmetry. In this paper, the underlying connections between \mbox{C-Codes} (or \mbox{quasi-C-Codes}) and starters in group theory are revealed. It is shown that each \mbox{C-Code} (or \mbox{quasi-C-Code}) of length $2n$ can be constructed using an even starter (or even \mbox{multi-starter}) in $(Z_{2n},+)$. It is also shown that each \mbox{C-Code} (or \mbox{quasi-C-Code}) has a twin \mbox{C-Code} (or \mbox{quasi-C-Code}). Then, four infinite families (three of which are new) of \mbox{C-Codes} of length $p-1$ are constructed, where $p$ is a prime. Besides the family of length $p-1$, \mbox{C-Codes} for some sporadic even lengths are also presented. Even so, there are still some even lengths (such as 8) for which \mbox{C-Codes} do not exist. To cover this limitation, two infinite families (one of which is new) of \mbox{quasi-C-Codes} of length $2(p-1)$ are constructed for these even lengths.
\keywords{RAID 6 \and Array codes \and Starters \and Perfect \mbox{one-factorization}}
\subclass{05C70 \and 20K01 \and 94B05}
\end{abstract}

\section{Introduction}\label{section-introduction}
Array codes \cite{array-code} are a class of linear codes whose information and parity bits are placed in a \mbox{two-dimensional} (or multidimensional) array rather than a \mbox{one-dimensional} vector. A common property of array codes is that they are implemented based on only simple \mbox{eXclusive-OR} (XOR) operations. This is an attractive advantage in contrast to the family of \mbox{Reed-Solomon} codes \cite{RS,RS-Cauchy2,RS-Cauchy1} whose encoding and decoding processes use complex \mbox{finite-field} operations. Thus, array codes are ubiquitous in data storage applications.\\
\indent Among all kinds of array codes, cyclic \mbox{lowest-density} \mbox{Maximum-Distance} Separable (MDS) array codes \cite{cyclic-code} are regarded as the optimal ones for data storage applications because they have all the following properties:
\begin{enumerate}
    \item they are MDS codes, which can tolerate the loss of any $t$ disks in a storage system with $t$ additional disks of parity data and thus have optimal \emph{storage efficiency} (i.e. the ratio of user data to the total of user data plus parity data);
    \item their \emph{update complexity} (defined as the average number of parity bits affected by a change of a single information bit) achieves the minimum update complexity that MDS codes can have; and
    \item their regularity in the form of cyclic symmetry makes their implementation simpler and potentially less costly.
\end{enumerate}

\begin{table}[H]
  \caption{Comparison among some representative MDS array codes for RAID 6.}
  \label{table-code-comparison}
  \begin{minipage}[t]{0.68\textwidth}
  \begin{center}
  \begin{tabular}{|c|c||c|c|c|}
    \hline
    \multicolumn{2}{|c||}{\makecell{Array Codes}} & \makecell{Optimal Update\\ Complexity} & \makecell{Optimal \\ Length\footnote{Considered only for lowest-density MDS codes (see \cite{B-Code}).}} & \makecell{Cyclic\\ Symmetry} \\
    \hline
    \hline
     & EVENODD\cite{EVENODD} & & & \\
    \cline{2-2}
    Horizontal & RDP\cite{RDP} & No & --- & No  \\
    \cline{2-2}
    \rb{Codes} & \makecell{Minimum Density\\ \mbox{RAID-6} Codes\cite{MinimumDensityRAID6Codes}} & & & \\
    \hline
     & X-Code\cite{X-Code} & Yes & No & No  \\
    \cline{2-5}
    \rb{Vertical} & B-Code\footnote{Including ZZS Code\cite{ZZS} and P-Code\cite{P-Code}.}\cite{B-Code} & Yes & Yes & No \\
    \cline{2-5}
    \rb{Codes} & C-Code\footnote{Including the codes constructed in \cite{cyclic-code}.}  & Yes & Yes & Yes \\
    \hline
  \end{tabular}
  \end{center}
  \end{minipage}
\end{table}

\indent Fault tolerance is an important concern in the design of \mbox{disk-based} storage systems \cite{latent_sector_error,EVENODD,RAID,RDP,P-Code,GRID,DACO,SDC,disk_failure1,MinimumDensityRAID6Codes,dRAID2-IBM,disk_failure2,RAID-high-level-survey,dRAID1-IBM,X-Code}. As today's storage systems grow in size and complexity, they are increasingly confronted with disk failures \cite{disk_failure1,disk_failure2} together with latent sector errors \cite{latent_sector_error}. Then, RAID 5 \cite{RAID}, which has been widely used in modern storage systems to recover one disk failure, cannot provide sufficient reliability guarantee. This results in the demand of RAID 6 \cite{RAID,RAID-high-level-survey}, which can tolerate two disk failures.\\
\indent RAID 6 is designed based on a \mbox{distance-3} MDS linear code. In applications with many small writes, such as the \mbox{On-Line} Transaction Processing (OLTP) application, the \mbox{distance-3} cyclic \mbox{lowest-density} MDS array code (called the \mbox{C-Code}) defined in Section~\ref{subsection-definition} can be regarded as a good candidate for RAID 6 (see Table~\ref{table-code-comparison}). It should be noted that the \mbox{C-Code} is also a good candidate for distributed RAID (dRAID) \cite{dRAID2-IBM,dRAID1-IBM} of fault tolerance 2 because of its optimal storage efficiency, optimal update complexity, and cyclic symmetry. Under this background, we will make a systematic study on the \mbox{C-Code} in this paper.\\
\indent Here, the \mbox{distance-3} cyclic \mbox{lowest-density} MDS array code is one particular kind of the \mbox{B-Code} described in \cite{B-Code}. Its additional feature is the regularity in the form of cyclic symmetry. It is thus called the \mbox{C-Code} in this paper. As we know, it is often an interesting and worthwhile problem to develop an array code that is also a cyclic code. Although cyclic symmetry can make the implementation of an array code simpler, it can make the construction of the code more complicated at the same time. As will be shown in this paper, the constructions of \mbox{C-Codes} (or \mbox{quasi-C-Codes})  require more efforts than those of general \mbox{B-Codes}. Besides, although Cassuto and Bruck constructed one infinite family of \mbox{C-Code} instances of length $p-1$ and one infinite family of \mbox{quasi-C-Code} instances of length $2(p-1)$ for the first time in \cite{cyclic-code}, they did not study the general constructions of \mbox{C-Codes} (or \mbox{quasi-C-Codes}). So, in this paper, we will carry out a systematic study on the constructions of \mbox{C-Codes} by revealing the underlying connections between \mbox{C-Codes} (or \mbox{quasi-C-Codes}) and \emph{starters} \cite{starter} in group theory.

\begin{definition}[\cite{P1F}]\label{definition-P1F}
A \emph{\mbox{one-factorization}} of a graph is a partitioning of the set of its edges into subsets such that each subset is a graph of degree one. Here, each subset is called a \emph{\mbox{one-factor}}. A \emph{perfect \mbox{one-factorization}} (or \emph{P1F}) is a particular \mbox{one-factorization} in which the union of any pair of \mbox{one-factors} is a Hamiltonian cycle.
\end{definition}

\indent (\emph{Remark:} A Hamiltonian cycle is a cycle in an undirected graph, which visits each vertex exactly once and also returns to the starting vertex.)\\
\indent The \mbox{C-Code} can be described using a graph approach proposed in \cite{B-Code} (see Section~\ref{subsection-description}). It will be shown in Section~\ref{subsection-equivalence} that the constructions of the \mbox{C-Code} of length $2n$ (denoted by $\mathbb{C}_{2n}$) are equivalent to bipyramidal P1Fs of a \mbox{$2n$-regular} graph on $2n+2$ vertices. In the literature of graph theory, we noticed that several known P1Fs \cite{P1F-list-(even)-starter-K32,P1F-K36-starter,P1F-K36-even-starter,P1F-K40-even-starter,P1F-K50-starter,P1F-K52-even-starter,2-even-starter-induced-P1Fs} were constructed using starters \cite{starter} in group theory. Inspired by this, we immediately raise a question: \emph{Which kind of starter can be used to construct the \mbox{C-Code}?} In Section~\ref{section-c-code}, we will show that each $\mathbb{C}_{2n}$ instance can be constructed using an \emph{even starter} \cite{even-starter-first} in $(Z_{2n},+)$. The necessary and sufficient condition is that the even starter can induce a bipyramidal P1F of a \mbox{$2n$-regular} graph on $2n+2$ vertices. Then, we will obtained \mbox{C-Codes} for some sporadic even lengths listed as follows: \[4, 6, 10, 12, 14, 16, 18, 20, 22, 24, 26, 28, 30, 32, 34, 36, 38, 50.\] Among them, 14, 20, 24, 26, 32, 34, 38, and 50 are not covered by the family of length $p-1$ presented in \cite{cyclic-code} (where $p$ is a prime). We will also show that a \mbox{C-Code} exists for most but not all of even lengths (one exception we found is 8) and that there are often more than one \mbox{C-Code} instances for a given even length (each \mbox{C-Code} instance always has a \emph{twin \mbox{C-Code} instance}). Here, the twin code can provide some choice flexibility for RAID 6 design. Then, we may wonder if there exists an infinite family of \mbox{C-Codes}. In Section~\ref{section-c-code-family}, we will construct four infinite families of $\mathbb{C}_{p-1}$ instances (which cover the family of $\mathbb{C}_{p-1}$ instances constructed in \cite{cyclic-code}) from two infinite families of even starters in $\left(Z^*_{p},\times\right)$. We will also conclude that  \mbox{non-cyclic} \mbox{B-Codes} of length $p-1$ constructed in \cite{P-Code}, \cite{B-Code}, and \cite{ZZS} can always be transformed to $\mathbb{C}_{p-1}$ instances.\\
\indent Besides, we noticed that there is no \mbox{C-Code} for some even lengths, such as 8. Then, one question is: \emph{Can we construct \mbox{quasi-C-Codes} (which partially hold cyclic symmetry \cite{cyclic-code}) for these even lengths?} In this paper, we say a \mbox{quasi-C-Code} of length $2n$ is a \emph{\mbox{$\kappa$-quasi-C-Code}} (denoted by $\mathbb{C}_{2n}^\kappa$, where $\kappa|2n$) if for $i=0,1,\cdots,\kappa-1$, each group of $\frac{2n}{\kappa}$ columns \[C_{i+\kappa\times 0},C_{i+\kappa\times 1},\cdots,C_{i+\kappa\times(\frac{2n}{\kappa}-1)}\] hold cyclic symmetry, where $C_j$ represents the $j$-th column of the code. In Section~\ref{section-quasi-c-code}, we will introduce a concept of \emph{even \mbox{multi-starters}} and then discuss how to construct \mbox{quasi-C-Codes} using even \mbox{multi-starters} (with an example of length 8). We will also show that each \mbox{quasi-C-Code} instance has a \emph{twin \mbox{quasi-C-Code} instance}. Similarly, in Section~\ref{section-c-code-family-quasi}, we will present an infinite family of even \mbox{2-starter} in $\left(Z_{2(p-1)},+\right)$ and then construct two infinite families of $\mathbb{C}_{2(p-1)}^2$ instances (which cover the family of $\mathbb{C}_{2(p-1)}^2$ instances constructed in \cite{cyclic-code}) using this family of even \mbox{2-starter}. We will also conclude that \mbox{non-cyclic} \mbox{B-Codes} of length $2(p-1)$ constructed in \cite{B-Code} can always be transformed to $\mathbb{C}_{2(p-1)}^2$ instances.

\subsection{The Main Contributions of This Paper}
\indent The main new findings of this paper include:
\begin{enumerate}
  \item Each \mbox{C-Code} of length $2n$ can be constructed using an even starter in $(Z_{2n},+)$ (see Section~\ref{section-c-code}), while each \mbox{quasi-C-Code} of length $2n$ can be constructed using an even \mbox{multi-starter} in $(Z_{2n},+)$ (see Section~\ref{section-quasi-c-code});
  \item Each \mbox{C-Code} (or \mbox{quasi-C-Code}) has a twin \mbox{C-Code} (or \mbox{quasi-C-Code}) (see Sections~\ref{section-c-code}~and~\ref{section-quasi-c-code});
  \item A \mbox{C-Code} exists for most but not all of even lengths (one exception is 8) (see Section~\ref{section-c-code}); and
  \item \mbox{Non-cyclic} \mbox{B-Codes} of length $p-1$ constructed in \cite{P-Code}, \cite{B-Code}, and \cite{ZZS} can always be transformed to \mbox{C-Codes} (see Section~\ref{section-c-code-family}), while \mbox{non-cyclic} \mbox{B-Codes} of length $2(p-1)$ constructed in \cite{B-Code} can always be transformed to \mbox{quasi-C-Codes} (see Section~\ref{section-c-code-family-quasi}).
\end{enumerate}

\indent Meanwhile, the main contributions of this paper to the constructions of \mbox{C-Codes} and \mbox{quasi-C-Codes} include:
\begin{enumerate}
  \item Four infinite families of \mbox{C-Codes} of length $p-1$ (which cover the family of \mbox{C-Codes} of length $p-1$ constructed in \cite{cyclic-code}) are constructed from two infinite families of even starters in $\left(Z^*_{p},\times\right)$ (where $p$ is a prime) in Section~\ref{section-c-code-family};
  \item Besides the family of length $p-1$, \mbox{C-Codes} for some sporadic even lengths listed as follows are obtained in Section~\ref{section-c-code}: \[14, 20, 24, 26, 32, 34, 38, 50;\]
  \item Two infinite families of \mbox{quasi-C-Codes} of length $2(p-1)$ (which cover the family of \mbox{quasi-C-Codes} of length $2(p-1)$ constructed in \cite{cyclic-code}) are constructed using an infinite family of even \mbox{2-starter} in $\left(Z_{2(p-1)},+\right)$ in Section~\ref{section-c-code-family-quasi}.
\end{enumerate}

\indent We begin this paper with an introduction of the \mbox{C-Code} in the next section.

\section{An Introduction of the \mbox{C-Code}}
\subsection{Definition and Structure}\label{subsection-definition}
\indent The \mbox{C-Code} is one particular kind of the \mbox{B-Code} described in \cite{B-Code}. Its additional feature is the regularity in the form of cyclic symmetry, and its algebraic definition is given as follows:
\begin{definition}\label{definition-cyclic-code}
Let \[\mathbf{H}_{2n}=(H_0\ H_1\ \cdots\ H_{2n-1})\] be a binary matrix, where \[H_k=(h_{i,j})_{2n\times n}\] is a binary submatrix of size $2n\times n$, for $0\leq i\leq 2n-1$, $0\leq j\leq n-1$, and $0\leq k\leq 2n-1$. Suppose $\mathbf{H}_{2n}$ meets the following four conditions:
\begin{enumerate}
  \item for $k=0,1,\cdots,2n-1$, the last column of $H_k$ is the same as the \mbox{$k$-th} column of a binary $2n\times 2n$ identity matrix;
  \item for $k=0,1,\cdots,2n-1$,
  \begin{equation}
  H_k=E_{2n}^{k}\times H_0,
  \end{equation}
  where $E_{2n}$ is a binary \emph{elemental cyclic matrix} defined as
  \[
  E_{2n}=\left(
           \begin{array}{cc}
             \overrightarrow{0} & 1 \\
             I_{2n-1} & {\overrightarrow{0}}^T \\
           \end{array}
         \right),
  \]
  where $I_{2n-1}$ is a binary $(2n-1)\times(2n-1)$ identity matrix, $\overrightarrow{0}$ is a binary $1\times(2n-1)$ vector of 0's, and ${\overrightarrow{0}}^T$ is a binary $(2n-1)\times 1$ vector of 0's;
  \item the \emph{weight} (i.e. the number of 1's) of each row of $\mathbf{H}_{2n}$ is $2n-1$; and
  \item for any $m$ and $k$ (where $0\leq m<k\leq 2n-1$), the square matrix $(H_m\ H_k)$ is nonsingular.
\end{enumerate}
If a code's \mbox{parity-check} matrix is $\mathbf{H}_{2n}$, the code is then called the \mbox{C-Code} of length $2n$, denoted by $\mathbb{C}_{2n}$.
\end{definition}

\indent In the above definition, it should be noted that the length of the \mbox{C-Code} is always an even number. This is guaranteed by the MDS property of the \mbox{B-Code} \cite{B-Code}.\\
\indent Take $\mathbb{C}_4$ for example. The \mbox{parity-check} matrix for a $\mathbb{C}_4$ instance is as follows:
\[
\mathbf{H}_4=\left(\begin{array}{cc|cc|cc|cc}
               0 & 1 & 0 & 0 & 1 & 0 & 1 & 0 \\
               1 & 0 & 0 & 1 & 0 & 0 & 1 & 0 \\
               1 & 0 & 1 & 0 & 0 & 1 & 0 & 0 \\
               0 & 0 & 1 & 0 & 1 & 0 & 0 & 1 \\
             \end{array}\right).
\]

\indent It can be easily checked that \[\left\{I_{2n},E_{2n},E_{2n}^{2},\cdots,E_{2n}^{2n-1}\right\}\] forms a cyclic group with binary matrix multiplication. We also have
\begin{equation}
E_{2n}^{2n}=I_{2n}.
\end{equation}

Thus, the code defined in Definition~\ref{definition-cyclic-code} has the property of cyclic symmetry.\\
\indent In addition, it can also be deduced from the results in \cite{B-Code} that the \mbox{C-Code} defined in Definition~\ref{definition-cyclic-code} has all the following optimal properties:
\begin{enumerate}
  \item it is \mbox{Maximum-Distance} Separable (MDS);
  \item its update complexity is 2, which is the minimum update complexity that MDS codes of distance 3 can have; and
  \item it achieves the maximum length that MDS codes with optimal update complexity can have.
\end{enumerate}

\indent The structure of the \mbox{C-Code} of length $2n$ evolves from that of the \mbox{B-Code} described in \cite{B-Code}. This kind of array code has dimensions $n\times 2n$, i.e. $n$ rows and $2n$ columns, where $n$ is an integer not smaller than 2. It was proved in \cite{ZZS} that this size has optimal length. The first $n-1$ rows are information rows, and the last row is a parity row. In other words, the bits in the first $n-1$ rows are information bits, while those in the last row are parity bits. Because of the optimal update complexity, each information bit contributes to the calculation of (or is protected by) exactly 2 parity bits contained in other columns. Moreover, any two information bits do not contribute to the calculation of the same pair of parity bits. Here, we can see that since each column (corresponding to a disk) contains both data and parity, the \mbox{C-Code} belongs to the category of vertical array codes.\\
\indent Take the foregoing $\mathbb{C}_4$ instance for example. It has 2 rows and 4 columns. Its array representation is given as follows:
\begin{center}
  \begin{tabular}{|c|c|c|c|}
  \hline
  $d_{1,2}$ & $d_{2,3}$ & $d_{3,0}$ & $d_{0,1}$ \\
  \hline
  \hline
  $p_0$ & $p_1$ & $p_2$ & $p_3$ \\
  \hline
  \end{tabular}
  ,
\end{center}
where $d_{i,j}$ ($0\leq i\neq j\leq 3$) represents a information bit that contributes to the calculation of (or is protected by) 2 parity bits $p_i$ and $p_j$. Then, take the parity bit $p_0$ for example. It can be calculated by $p_0=d_{3,0}+d_{0,1}$.

\subsection{Graph Description}\label{subsection-description}
In the \mbox{C-Code}, each information bit contributes to the calculation of (or is protected by) exactly 2 parity bits contained in other columns. Moreover, any two information bits do not contribute to the same pair of parity bits. Thus, a graph approach \cite{B-Code} can be used to describe the \mbox{C-Code}.\\
\indent In the graph description of the \mbox{C-Code}, each parity bit is represented by a vertex, and each information bit that contributes to the calculation of 2 parity bits is represented by the edge that connects the two corresponding vertices. Then, a \mbox{C-Code} of length $2n$ can be described by a \mbox{$(2n-2)$-regular} graph $G$ on $2n$ vertices. We label the $2n$ vertices with integers from 0 to $2n-1$ such that the \mbox{$i$-th} vertex ($i=0,1,\cdots,2n-1$) represents the parity bit contained in the \mbox{$i$-th} column of the \mbox{C-Code}. Then, for $i=0,1,\cdots,2n-1$, the \mbox{$i$-th} column of the \mbox{C-Code} can be represented by a set of $n-1$ edges, i.e.
\[
    C_i=\left\{ \{x_{i,1},y_{i,1}\},\{x_{i,2},y_{i,2}\},\cdots,\{x_{i,n-1},y_{i,n-1}\} \right\},
\]
where $\{x_{i,j},y_{i,j}\}$ ($j=1,2,\cdots,n-1$) is an edge corresponding to an information bit contained in the \mbox{$i$-th} column. According to the cyclic symmetry of the \mbox{C-Code}, for $i=0,1,\cdots,2n-1$, we have
\begin{equation}\label{equation-cyclic}
C_{i}=\left\{ \{x+i \bmod{2n},y+i \bmod{2n}\}: \{x,y\}\in C_0\right\}.
\end{equation}
Thus, in this paper, we sometimes use $C_0$ to simply represent a \mbox{C-Code}.\\
\indent Take the foregoing $\mathbb{C}_4$ instance for example. Figure~\ref{fig-C4-graph} shows its graph representation. The corresponding graph $G$ is a \mbox{2-regular} graph on a set of 4 vertices $\{0,1,2,3\}$. The four columns of the code can be represented by
\[
\left\{
  \begin{array}{l}
    C_0=\left\{ \{1,2\} \right\}; \\
    C_1=\left\{ \{2,3\} \right\}; \\
    C_2=\left\{ \{3,0\} \right\}; \\
    C_3=\left\{ \{0,1\} \right\}.
  \end{array}
\right.
\]
It is clear that $\left\{C_0,C_1,C_2,C_3\right\}$ meets Equation~(\ref{equation-cyclic}). Then, this $\mathbb{C}_4$ instance can be represented simply by $C_0=\left\{ \{1,2\} \right\}$.

\begin{figure}[H]
\subfigure[]{
\includegraphics[width=1.4in]{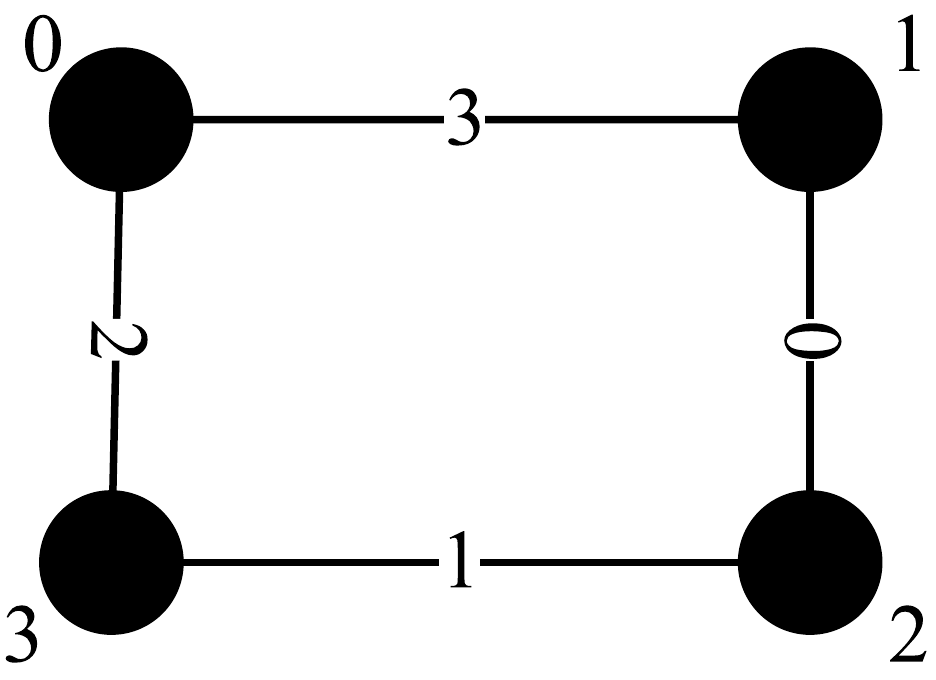}
\label{fig-C4-graph}
}
\hspace{0.2in}
\subfigure[]{
\includegraphics[width=1.4in]{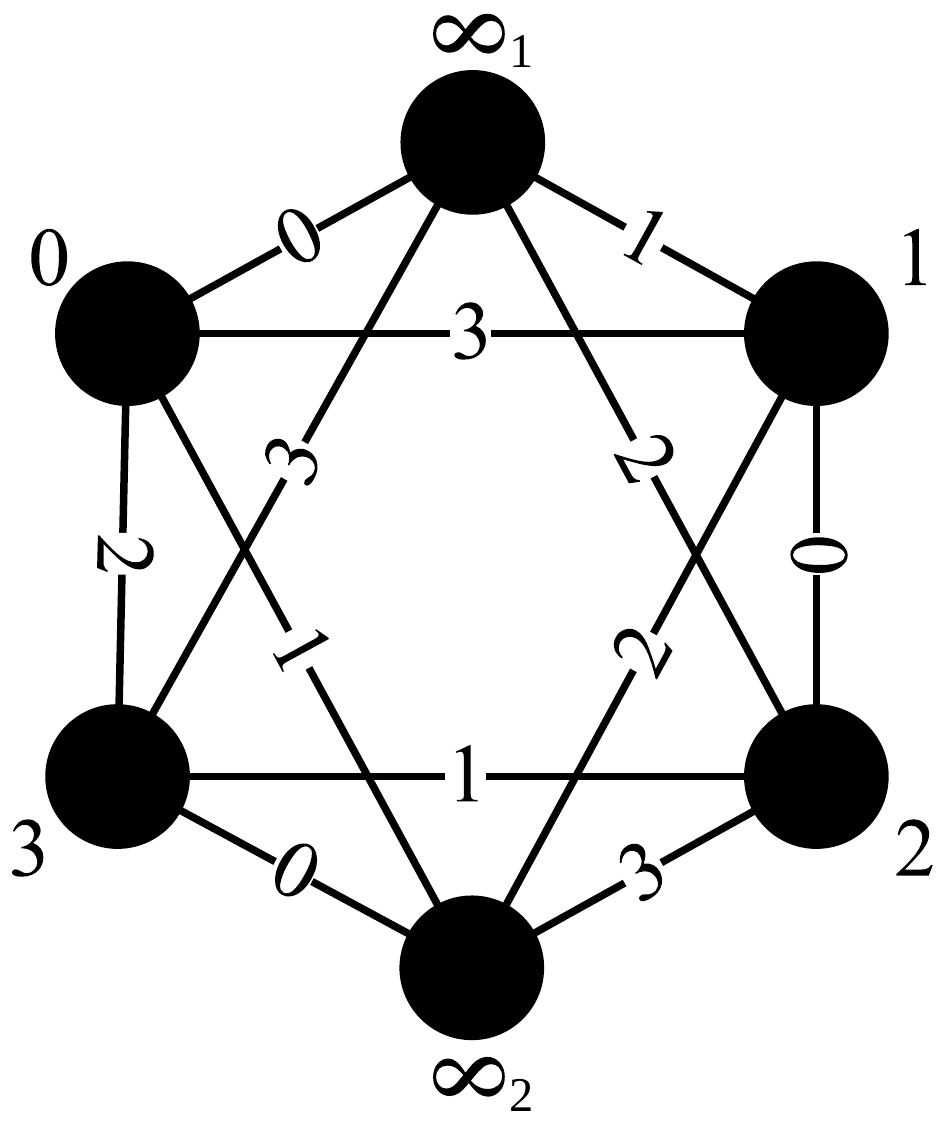}
\label{fig-expanded-C4-graph}
}
\caption{Constructing (a) a $\mathbb{C}_4$ instance from (b) a bipyramidal P1F of a \mbox{4-regular} graph on 6 vertices.}\label{fig-construction}
\end{figure}

\indent Recall the \mbox{C-Code} of length $2n$ can recover the erasure of any two columns. This is guaranteed by the fourth condition of Definition~\ref{definition-cyclic-code}. In the graph description, this condition is equivalent to the following one:
\begin{condition}\label{condition}
For any $m$ and $k$ (where $0\leq m<k\leq 2n-1$), the subgraph \[G^*=(\{0,1,\cdots,2n-1\},C_m\cup C_k)\] does not contain a cycle or a path whose terminal vertices are the two vertices $m$ and $k$.
\end{condition}

\indent The above condition is explained by contradiction as follows: \\
\indent We first consider the first opposite case where $G^*$ contains a cycle of length $r$. In such a cycle, suppose the $r$ edges are $e_1,e_2,\cdots,e_r$. As we know, in the corresponding square matrix mentioned in the fourth condition of Definition~\ref{definition-cyclic-code}, the column vector corresponding to each edge is a vector of weight 2, whose two 1's are in the two rows corresponding to the two vertices of the edge. Then, in the square matrix, the binary sum of the $r$ column vectors corresponding to $e_1,e_2,\cdots,e_r$ is a zero vertical vector, which conflicts with the nonsingular property of the square matrix. Thus, $G^*$ should not contain a cycle.\\
\indent We then consider the second opposite case where $G^*$ contains a path of length $r'$ whose terminal vertices are the two vertices $m$ and $k$. In such a path, suppose the $r'$ edges are ${e'}_1,{e'}_2,\cdots,{e'}_{r'}$. As we know, in the corresponding square matrix mentioned in the fourth condition of Definition~\ref{definition-cyclic-code}, the column vector corresponding to the terminal vertex $m$ (or $k$) is a vector of weight 1, whose only 1 is in the row corresponding to the vertex $m$ (or $k$). Then, in the square matrix, the binary sum of the $r'+2$ column vectors corresponding to the two terminal vertices $m$ and $k$ and the $r'$ edges ${e'}_1,{e'}_2,\cdots,{e'}_{r'}$ is a zero vertical vector, which conflicts with the nonsingular property of the square matrix. Thus, $G^*$ should not contain a path whose terminal vertices are the two vertices $m$ and $k$.

\subsection{The Equivalence Between \mbox{C-Code} Constructions and Bipyramidal P1Fs}\label{subsection-equivalence}
\begin{definition}\label{definition-cyclicbipyramidal1F}
For a \mbox{$2n$-regular} graph on $2n+2$ vertices $0,1,\cdots,2n-1,\infty_1,\infty_2$ (where the two vertices $\infty_1$ and $\infty_2$ are not adjacent to each other), a \emph{bipyramidal \mbox{one-factorization}} is a \mbox{one-factorization} consisting of $2n$
factors $F_0,F_1,\cdots,F_{2n-1}$, which are defined as
\begin{equation}
F_i=\left\{ \{\sigma^i(x),\sigma^i(y)\}:\{x,y\}\in F_0\right\}
\end{equation}
for $i=0,1,\cdots,2n-1$, where
\[\sigma=(0\ 1\ \cdots\ 2n-1)(\infty_1)(\infty_2)\] is a permutation represented by a product of disjoint cycles.
\end{definition}

\indent In the above definition, if the \mbox{one-factorization} is perfect, it is then called a \emph{bipyramidal P1F} of a \mbox{$2n$-regular} graph on $2n+2$ vertices.\\
\indent Then, we present the following theorem:
\begin{theorem}\label{theorem-equivalence}
The constructions of the \mbox{C-Code} of length $2n$ are equivalent to bipyramidal P1Fs of a \mbox{$2n$-regular} graph on $2n+2$ vertices. Suppose $\mathbb{F}$ is a bipyramidal P1F of a \mbox{$2n$-regular} graph on $2n+2$ vertices $0,1,\cdots,2n-1,\infty_1,\infty_2$ (where the two vertices $\infty_1$ and $\infty_2$ are not adjacent to each other), in which $F_0$ is the \mbox{one-factor} that contains the edge $\{0,\infty_1\}$. Then, the first column of the corresponding \mbox{C-Code} $\mathbb{C}_{2n}$ is
\begin{equation}
C_0=F_0\setminus\left\{ \{0,\infty_1\}, \{r,\infty_2\} \right\},
\end{equation}
where $r$ is the vertex that is adjacent to the vertex $\infty_2$ in $F_0$.
\end{theorem}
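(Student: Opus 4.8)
The plan is to establish the equivalence in both directions by exhibiting an explicit dictionary between the combinatorial data of a $\mathbb{C}_{2n}$ instance (a choice of $C_0$ satisfying Condition~\ref{condition} for all pairs $m<k$) and the data of a bipyramidal P1F of a $2n$-regular graph on $2n+2$ vertices. First I would unpack what $C_0$ looks like as a graph object. By the description in Section~\ref{subsection-description}, a $\mathbb{C}_{2n}$ instance is an edge set $C_0$ on vertices $\{0,1,\dots,2n-1\}$ of size $n-1$, whose cyclic shifts $C_0,C_1,\dots,C_{2n-1}$ via $x\mapsto x+i \bmod 2n$ partition the edge set of a $(2n-2)$-regular graph $G$ on $2n$ vertices, and Condition~\ref{condition} (the erasure-recovery requirement) must hold for every pair. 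The idea is to ``complete'' this picture to $2n+2$ vertices: add two vertices $\infty_1,\infty_2$, and extend $C_0$ to a one-factor $F_0$ on all $2n+2$ vertices by adjoining the two edges $\{0,\infty_1\}$ and $\{r,\infty_2\}$, where $r$ is chosen to be the unique vertex of $\{0,\dots,2n-1\}$ not covered by $C_0$ and different from $0$. (One must check such an $r$ exists and is unique: $C_0$ has $n-1$ edges hence covers at most $2n-2$ of the $2n$ vertices, and a degree/weight count — the weight-$(2n-1)$ condition of Definition~\ref{definition-cyclic-code} combined with cyclic symmetry — pins down exactly which two vertices are missed, one of which is forced to be $0$.)

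Next I would verify that the cyclic shifts $F_i=\{\{\sigma^i(x),\sigma^i(y)\}:\{x,y\}\in F_0\}$, with $\sigma=(0\,1\,\cdots\,2n-1)(\infty_1)(\infty_2)$, actually form a one-factorization of the resulting $2n$-regular graph on $2n+2$ vertices. The $2n$ shifts of the edge $\{0,\infty_1\}$ give all edges $\{i,\infty_1\}$ for $i=0,\dots,2n-1$, so $\infty_1$ gets degree $2n$ and appears in each $F_i$ exactly once; similarly the shifts of $\{r,\infty_2\}$ give all edges $\{i,\infty_2\}$, handling $\infty_2$; and the shifts of the ``interior'' edges of $C_0$ recover the original $(2n-2)$-regular graph $G$ on $\{0,\dots,2n-1\}$, so each of those vertices has degree $2n$ in total ($2n-2$ interior plus one to $\infty_1$ plus one to $\infty_2$). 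That $\infty_1,\infty_2$ are non-adjacent is automatic since no edge of $F_0$ joins them. That each $F_i$ is a genuine one-factor (perfect matching) of all $2n+2$ vertices follows because $F_0$ is one by construction and $\sigma$ is a bijection. The partition property is just a restatement of the fact that $\{C_i\}$ partitions $E(G)$.

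The crux is showing that the one-factorization is \emph{perfect} — i.e. every union $F_i\cup F_j$ is a Hamiltonian cycle on $2n+2$ vertices — if and only if Condition~\ref{condition} holds for the original code. By cyclic symmetry it suffices to analyze $F_0\cup F_j$ for each $j$. The union of two perfect matchings is always a disjoint union of even cycles covering all vertices, so perfection is equivalent to there being a \emph{single} cycle. I would trace how the two ``spine'' edges behave: $F_0$ contributes $\{0,\infty_1\}$ and $\{r,\infty_2\}$, while $F_j$ contributes $\{-j \bmod 2n,\infty_1\}$ and $\{r-j \bmod 2n,\infty_2\}$ (reading off the shifts). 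Thus in $F_0\cup F_j$, vertex $\infty_1$ has its two edges going to $0$ and to $-j\bmod 2n$, and $\infty_2$ has its two edges going to $r$ and to $r-j\bmod 2n$; deleting $\infty_1$ and $\infty_2$ and splicing leaves exactly the multigraph $G^*=(\{0,\dots,2n-1\},\, C_0\cup C_j)$ together with the information that the four stub-vertices $0,-j,r,r-j$ are joined in the pattern dictated by passing through the two $\infty$'s. The single-cycle condition on $F_0\cup F_j$ then translates, after this splicing, precisely into the statement that $C_0\cup C_j$ contains no cycle and no $m$--$k$ path for the relevant pair $\{m,k\}$ — which is Condition~\ref{condition}. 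Running this argument over all $j$ (equivalently all pairs $m<k$) gives the equivalence, and the displayed formula $C_0=F_0\setminus\{\{0,\infty_1\},\{r,\infty_2\}\}$ with $r$ the $\infty_2$-neighbour in $F_0$ is simply the inverse of the completion map.

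I expect the main obstacle to be the careful bookkeeping in this last step: correctly identifying, for each shift $j$, which vertices play the roles of the terminal vertices $m$ and $k$ in Condition~\ref{condition}, and checking that the ``no cycle \emph{and} no bad path'' dichotomy matches exactly the ``exactly one cycle'' dichotomy for the union of matchings — in particular ruling out degenerate small cases (e.g. $j$ such that $0$, $-j$, $r$, $r-j$ are not all distinct, or $C_0\cap C_j\neq\emptyset$ creating a 2-cycle) and confirming the index arithmetic modulo $2n$ lines up so that the pair $\{m,k\}$ ranges over all of $\{0\le m<k\le 2n-1\}$ as $j$ ranges over $1,\dots,2n-1$. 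Everything else is essentially a degree count plus the standard structure theorem for the union of two perfect matchings.
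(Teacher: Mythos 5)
Your proposal is correct and follows essentially the same route as the paper's proof in Appendix~A: both directions are realized by adjoining/removing the two vertices $\infty_1,\infty_2$ together with the edges $\{i,\infty_1\}$ and $\{r_i,\infty_2\}$, and the heart of the argument in each case is the splicing correspondence between ``$F_m\cup F_k$ is a Hamiltonian cycle'' and ``$C_m\cup C_k$ has no cycle and no $m$--$k$ path'' (Condition~\ref{condition}). Your explicit appeal to the structure of the union of two perfect matchings and your attention to the degenerate stub-vertex cases are just a slightly more careful phrasing of the same bookkeeping the paper carries out.
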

\begin{proof}
See Appendix~\ref{appendix1}.
\end{proof}

\section{Constructing a \mbox{C-Code} Using Even Starters}\label{section-c-code}
We first give the definition of an even starter \cite{even-starter-first} in $(Z_{2n},+)$.
\begin{definition}\label{definition-evenstarter}
An \emph{even starter} $S_E$ in $(Z_{2n},+)$ is a set of $n-1$ pairs of \mbox{non-zero} elements, i.e. \[S_E=\left\{
\{x_1,y_1\},\{x_2,y_2\},\cdots,\{x_{n-1},y_{n-1}\} \right\},\] such that for every element $i\in Z_{2n}$ such that $i\neq 0$ and $i\neq n$, there exists a pair $\{x,y\}\in S_E$ such that $i=(x-y) \bmod{2n}$, or $i=(y-x) \bmod{2n}$. Its \emph{twin even starter} $S_E^\tau$ is defined as
\begin{equation}\label{equation-twin-even-starter}
    S_E^\tau=\left\{\{x-r,y-r\}:\{x,y\}\in S_E\right\},
\end{equation}
where $r$ is the one and only \mbox{non-zero} element that does not occur in $S_E$.
\end{definition}

\indent Take $S_E=\left\{\{1,2\},\{3,5\} \right\}$ in $(Z_{6},+)$ for example. For every \mbox{non-zero} element except 3, we have
\[
\left\{
  \begin{array}{ll}
    1=2-1 \bmod{6}; \\
    2=5-3 \bmod{6}; \\
    4=3-5 \bmod{6}; \\
    5=1-2 \bmod{6}.
  \end{array}
\right.
\]
Thus, $S_E$ is an even starter in $(Z_{6},+)$. Its twin even starter is $S_E^\tau=\left\{\{3,4\},\{5,1\} \right\}$.\\
\indent Then, we present the following theorem:
\begin{theorem}\label{theorem-bipyramidal-1F-even-starter}
For a \mbox{$2n$-regular} graph on $2n+2$ vertices $0,1,\cdots,2n-1,\infty_1,\infty_2$ (where the two vertices $\infty_1$ and $\infty_2$ are not adjacent to each other), suppose a \mbox{one-factorization} $\mathbb{F}$ is a bipyramidal \mbox{one-factorization} in which $F_0$ is the \mbox{one-factor} that contains the edge $\{0,\infty_1\}$. Let
\begin{equation}
S=F_0\setminus\left\{ \{0,\infty_1\}, \{r,\infty_2\} \right\},
\end{equation}
where $r$ is the vertex that is adjacent to the vertex $\infty_2$ in $F_0$. Then, $S$ is an even starter in $(Z_{2n},+)$.
\end{theorem}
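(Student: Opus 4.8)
The plan is to verify directly that the set $S = F_0 \setminus \{\{0,\infty_1\},\{r,\infty_2\}\}$ satisfies the two defining properties of an even starter: (i) it consists of $n-1$ pairs of nonzero elements of $Z_{2n}$, and (ii) every element $i \in Z_{2n}$ with $i \neq 0$ and $i \neq n$ arises as a difference $\pm(x-y) \bmod 2n$ for some pair $\{x,y\} \in S$. The key structural input is that $F_0$ is a one-factor (perfect matching) of a $2n$-regular graph on the $2n+2$ vertices $0,1,\dots,2n-1,\infty_1,\infty_2$, so $F_0$ has exactly $n+1$ edges partitioning all vertices; removing the two edges incident to $\infty_1$ and $\infty_2$ leaves exactly $n-1$ edges on the vertex set $\{0,1,\dots,2n-1\} \setminus \{0,r\}$, all of whose endpoints are nonzero. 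That disposes of property (i) almost immediately, and it also tells us that $S$ covers the $2n-2$ vertices of $Z_{2n}$ other than $0$ and $r$.

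First I would set up the counting argument for property (ii). Consider the multiset $D(S)$ of "signed differences": for each pair $\{x,y\} \in S$ record both $(x-y)\bmod 2n$ and $(y-x)\bmod 2n$. Since $|S| = n-1$, this multiset has $2(n-1) = 2n-2$ elements, none of which is $0$ (because all pairs have distinct endpoints). The claim is exactly that $D(S)$ is precisely the set $Z_{2n}\setminus\{0,n\}$, which also has $2n-2$ elements, so it suffices to show $D(S)$ contains no repeats and never takes the value $n$ — a pigeonhole/counting collapse. Equivalently, by the symmetry $d \mapsto -d$ of $D(S)$, it suffices to show that among the $n-1$ unsigned differences $\{(x-y)\bmod 2n\} \in \{1,\dots,n\}$ (taking the representative in that range), no value of $\{1,\dots,n-1\}$ is repeated and $n$ never occurs.

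The substantive step — and the one I expect to be the main obstacle — is to extract these difference properties from the \emph{perfectness} of the one-factorization $\mathbb{F}$, i.e.\ from the fact that every union $F_0 \cup F_i$ is a Hamiltonian cycle on $2n+2$ vertices, together with the bipyramidal structure $F_i = \{\{\sigma^i(x),\sigma^i(y)\} : \{x,y\}\in F_0\}$ where $\sigma$ fixes $\infty_1,\infty_2$ and cycles $0,1,\dots,2n-1$. Here is the mechanism I would use. Suppose for contradiction that two pairs $\{x_1,y_1\},\{x_2,y_2\} \in S$ have the same unsigned difference $d$, say $x_1 - y_1 \equiv x_2 - y_2 \equiv d \pmod{2n}$. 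Then translating by $\sigma^{d}$ (or by the appropriate shift), the edge $\{x_2-d, y_2-d\} = \{y_2, x_2 - d\}$... more cleanly: $\sigma^{\,y_1 - x_2}$ maps $\{x_2,y_2\}$ to an edge of $F_{y_1-x_2}$ sharing the vertex $y_1$ with the edge $\{x_1,y_1\}\in F_0$, and by tracking all such coincidences one shows that $F_0 \cup F_{j}$ (for the relevant index $j$) decomposes into more than one cycle, or produces a short cycle, contradicting Hamiltonicity. Likewise, if some pair in $S$ had difference exactly $n$, then $\sigma^n$ would fix that edge setwise, so it would lie in $F_0 \cap F_n$; combined with the edge $\{0,\infty_1\}\in F_0$ and $\{n,\infty_1\} = \sigma^n(\{0,\infty_1\}) \in F_n$, the union $F_0 \cup F_n$ would again fail to be a single Hamiltonian cycle (it would contain the length-2 cycle on that repeated edge). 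I would phrase the cycle-counting cleanly by noting that $|F_0 \cup F_i|$ as a 2-regular multigraph on $2n+2$ vertices is a Hamiltonian cycle if and only if it is connected with no repeated edges, and then argue that any repeated difference or difference equal to $n$ forces either a repeated edge or a disconnection in some specific $F_0 \cup F_i$. The bookkeeping of exactly which index $i$ to use, and handling the edges through $\infty_1$ and $\infty_2$ separately from the "finite" edges, is where the care is needed; everything else is routine counting.
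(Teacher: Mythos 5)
Your setup---count the $2n-2$ signed differences of $S$, note that the target set $Z_{2n}\setminus\{0,n\}$ also has $2n-2$ elements, and reduce to showing no difference equals $n$ and none repeats---is exactly the right frame, and your treatment of the difference-$n$ case is essentially the paper's: an edge $\{x^*,y^*\}$ with $y^*-x^*\equiv n$ is fixed setwise by the shift $\sigma^n$, hence lies in both $F_0$ and $F_n$. But two things go wrong from there. First, the contradiction to draw at that point is simply that a one-factorization \emph{partitions} the edge set, so no edge can lie in two distinct one-factors; you do not need perfectness at all. Indeed the theorem only assumes $\mathbb{F}$ is a bipyramidal one-factorization, not a P1F, so by routing the contradiction through Hamiltonicity of $F_0\cup F_i$ you are leaning on a hypothesis that is not available in the statement.

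Second, and more seriously, your mechanism for the repeated-difference case does not close. If $\{x_1,y_1\}$ and $\{x_2,y_2\}$ are distinct pairs of $S$ with $x_1-y_1\equiv x_2-y_2$, the shift you want is $k=x_1-x_2=y_1-y_2\neq 0$: then $\sigma^{k}$ carries $\{x_2,y_2\}$ exactly onto $\{x_1,y_1\}$, so $\{x_1,y_1\}\in F_0\cap F_{k}$ and you are done by the partition property, exactly as in the difference-$n$ case (this is the paper's argument). The shift $\sigma^{\,y_1-x_2}$ that you propose only makes the image of $\{x_2,y_2\}$ \emph{share a vertex} with $\{x_1,y_1\}$, which is no contradiction whatsoever---in the $2$-regular union $F_0\cup F_j$ every vertex is incident to one edge from each factor---and the subsequent claim that ``tracking all such coincidences'' yields a short cycle or a disconnection is left entirely unsubstantiated; you acknowledge this is where the care is needed, but it is precisely the step that is missing. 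With the correct translation the whole argument collapses to a short pigeonhole-plus-partition contradiction and needs no appeal to perfectness.
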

\begin{proof}
See Appendix~\ref{appendix2}.
\end{proof}

\indent According to Theorem~\ref{theorem-equivalence} in Section~\ref{subsection-equivalence}, we can further make the following conclusion:
\begin{theorem}\label{theorem-c-code-even}
In a \mbox{C-Code} $\mathbb{C}_{2n}$, the first column $C_0$ is always an even starter in $(Z_{2n},+)$. An even starter $S_E$ in $(Z_{2n},+)$ can be used to construct a \mbox{C-Code} of length $2n$ if and only if $S_E$ can induce a bipyramidal P1F of a \mbox{$2n$-regular} graph on $2n+2$ vertices.
\end{theorem}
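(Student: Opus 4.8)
The plan is to derive this statement by combining Theorem~\ref{theorem-equivalence} with Theorem~\ref{theorem-bipyramidal-1F-even-starter}, after supplying the one direction not yet available to me, namely a converse of Theorem~\ref{theorem-bipyramidal-1F-even-starter}. The first assertion is almost immediate: given a \mbox{C-Code} $\mathbb{C}_{2n}$, Theorem~\ref{theorem-equivalence} associates to it a bipyramidal P1F $\mathbb{F}$ on the $2n+2$ vertices $0,1,\dots,2n-1,\infty_1,\infty_2$ with $C_0=F_0\setminus\{\{0,\infty_1\},\{r,\infty_2\}\}$, where $r$ is the vertex matched to $\infty_2$ in $F_0$. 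A bipyramidal P1F is in particular a bipyramidal \mbox{one-factorization}, so Theorem~\ref{theorem-bipyramidal-1F-even-starter} applies word for word and tells us that this set $C_0$ is an even starter in $(Z_{2n},+)$.

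For the equivalence, I would first make precise the phrase ``$S_E$ induces a bipyramidal P1F''. By the conventions of Definition~\ref{definition-evenstarter}, the $n-1$ pairs of an even starter $S_E$ are disjoint and use exactly $2n-2$ of the $2n-1$ \mbox{non-zero} elements of $Z_{2n}$, so there is a unique \mbox{non-zero} element $r$ missing from $S_E$. Put $F_0:=S_E\cup\{\{0,\infty_1\},\{r,\infty_2\}\}$ and, with $\sigma=(0\ 1\ \cdots\ 2n-1)(\infty_1)(\infty_2)$, set $F_i:=\sigma^i(F_0)$ for $i=0,\dots,2n-1$. The lemma I need (the converse of Theorem~\ref{theorem-bipyramidal-1F-even-starter}) is that $\{F_0,\dots,F_{2n-1}\}$ is always a bipyramidal \mbox{one-factorization} of some \mbox{$2n$-regular} graph on $2n+2$ vertices in which $\infty_1$ and $\infty_2$ are not adjacent. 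I would prove it by running the argument of Appendix~\ref{appendix2} in reverse: $F_0$ is a perfect matching of the $2n+2$ vertices; the $F_i$ are pairwise \mbox{edge-disjoint}, because a finite edge $\{a,b\}\in F_i$ corresponds to the pair $\{a-i,b-i\}\in S_E$ and, $S_E$ realizing every element of $Z_{2n}\setminus\{0,n\}$ exactly once as a difference $\pm(x_j-y_j)$, the relation $\{a,b\}\in F_i\cap F_{i'}$ forces $\{a-i,b-i\}$ and $\{a-i',b-i'\}$ to be the same pair, hence $i=i'$ (the remaining ``diagonal'' case being excluded since $a-b\not\equiv n\pmod{2n}$), while each edge at $\infty_1$ or $\infty_2$ lies in exactly one $F_i$; consequently the union of the $F_i$ is \mbox{$2n$-regular}, with $\infty_1$ and $\infty_2$ each joined to all of $0,\dots,2n-1$ and therefore not to each other. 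With this lemma, ``$S_E$ induces a bipyramidal P1F'' means exactly that this canonical \mbox{one-factorization} is perfect.

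Both directions of the equivalence then follow quickly. If $S_E$ induces a bipyramidal P1F $\mathbb{F}$, Theorem~\ref{theorem-equivalence} converts $\mathbb{F}$ into a \mbox{C-Code} $\mathbb{C}_{2n}$ with first column $F_0\setminus\{\{0,\infty_1\},\{r,\infty_2\}\}=S_E$, so $S_E$ constructs a \mbox{C-Code} of length $2n$. Conversely, if $S_E$ is the first column $C_0$ of some \mbox{C-Code} $\mathbb{C}_{2n}$, Theorem~\ref{theorem-equivalence} gives a bipyramidal P1F $\mathbb{F}$ with $C_0=F_0\setminus\{\{0,\infty_1\},\{r',\infty_2\}\}$; since $F_0$ is a perfect matching and $C_0=S_E$ omits only the element $r$, the vertex $r'$ matched to $\infty_2$ must equal $r$, so $F_0=S_E\cup\{\{0,\infty_1\},\{r,\infty_2\}\}$ and $\mathbb{F}$ is precisely the (perfect) \mbox{one-factorization} induced by $S_E$.

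The only step that is more than bookkeeping is the converse lemma in the second paragraph: that the canonical extension of an \emph{arbitrary} even starter is always a bipyramidal \mbox{one-factorization} (perfect or not). This is where the \mbox{even-starter} axioms --- the complete difference coverage and, crucially, the exclusion of the difference $n$ --- are actually invoked; it mirrors the computation already carried out in Appendix~\ref{appendix2}, so I expect it to be routine, but it is the substantive part of the argument.
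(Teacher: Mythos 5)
Correct, and essentially the paper's approach: the paper presents this theorem as an immediate consequence of Theorem~\ref{theorem-equivalence} together with Theorem~\ref{theorem-bipyramidal-1F-even-starter}, with no further written proof. The converse lemma you isolate --- that the canonical extension $F_0=S_E\cup\{\{0,\infty_1\},\{r,\infty_2\}\}$ of an \emph{arbitrary} even starter and its cyclic shifts always form a bipyramidal \mbox{one-factorization} --- is indeed the only substantive step; the paper leaves it implicit (it underlies the search procedure of Section~\ref{section-c-code}), and your difference-counting argument for the pairwise edge-disjointness of the $F_i$, including the appeal to the excluded difference $n$, is sound.
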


\indent At the same time, we can deduce the following conclusion:

\begin{theorem}\label{theorem-twin}
If a $\mathbb{C}_{2n}$ instance can be constructed using an even starter $S_E$ in $(Z_{2n},+)$, another $\mathbb{C}_{2n}$  instance can also be constructed using the twin even starter $S_E^\tau$. They are called \emph{twin $\mathbb{C}_{2n}$ instances}.
\end{theorem}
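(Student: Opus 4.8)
Looking at this theorem, I need to show that if an even starter $S_E$ constructs a $\mathbb{C}_{2n}$ instance, then its twin even starter $S_E^\tau$ also constructs a $\mathbb{C}_{2n}$ instance.

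The plan is to leverage Theorem~\ref{theorem-c-code-even}: an even starter $S_E$ constructs a $\mathbb{C}_{2n}$ if and only if it induces a bipyramidal P1F of a $2n$-regular graph on $2n+2$ vertices. So the task reduces to showing that $S_E$ inducing such a P1F implies $S_E^\tau$ also induces one. First I would unwind the construction backward: given the even starter $S_E$, reconstruct the one-factor $F_0$ from which it came, namely $F_0 = S_E \cup \{\{0,\infty_1\},\{r,\infty_2\}\}$ where $r$ is the unique non-zero element of $Z_{2n}$ missing from $S_E$ (the element-sums in $S_E$ cover every non-zero non-$n$ element exactly once, so $r$ and $n$ are the two "leftover" values, and $\{0,\infty_1\},\{r,\infty_2\}$ together with the $n$-difference pair complete the factor). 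Then $\mathbb{F} = \{F_i : i \in Z_{2n}\}$ with $F_i = \{\{x+i,y+i\} : \{x,y\}\in F_0\}$ (fixing $\infty_1,\infty_2$) is the bipyramidal P1F.

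The key observation is that the twin even starter $S_E^\tau = \{\{x-r,y-r\} : \{x,y\}\in S_E\}$ is simply the image of $S_E$ under the shift map $\phi: z \mapsto z - r$ on $Z_{2n}$ (extended to fix $\infty_1$ and $\infty_2$). Since $\phi$ is a graph automorphism of the complete graph on $\{0,\dots,2n-1,\infty_1,\infty_2\}$ that commutes with the generating permutation $\sigma = (0\ 1\ \cdots\ 2n-1)(\infty_1)(\infty_2)$, applying $\phi$ to the whole P1F $\mathbb{F}$ yields another bipyramidal one-factorization $\mathbb{F}^\tau = \{\phi(F_i)\}$, and it is still perfect because automorphisms preserve Hamiltonian cycles (the union of any two factors maps to the union of their images, which remains a Hamiltonian cycle). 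So I would check that the factor $\phi(F_0)$, which equals $\{\{0,\infty_1\}_{\text{shifted}}\}\cup\dots$, actually contains the edge $\{0,\infty_1\}$ after relabeling — here I need to be careful: $\phi(F_0)$ contains $\{-r,\infty_1\}$ and $\{0,\infty_2\}$, so $\phi(F_0)$ is not the "$F_0$-role" factor of $\mathbb{F}^\tau$; rather, the factor containing $\{0,\infty_1\}$ is $\phi(F_r) = \sigma^{-r}\phi(F_0)$ applied appropriately. I would verify that the even starter extracted from $\mathbb{F}^\tau$ via Theorem~\ref{theorem-bipyramidal-1F-even-starter} is precisely $S_E^\tau$, by computing $\phi(S_E) = S_E^\tau$ directly from the twin definition and checking the missing element of $S_E^\tau$ is $-r \bmod 2n$, so the two removed edges are $\{0,\infty_1\}$ and $\{-r,\infty_2\}$, leaving exactly $S_E^\tau$.

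The main obstacle will be the bookkeeping around which one-factor plays the role of "$F_0$" (the one containing $\{0,\infty_1\}$) in the twinned factorization, and confirming that the starter read off from it matches Equation~(\ref{equation-twin-even-starter}) exactly rather than some shifted variant; this requires tracking how $\phi$ and the cyclic shifts $\sigma^i$ interact. Once that identification is pinned down, the perfectness of $\mathbb{F}^\tau$ is automatic from the automorphism argument, and Theorem~\ref{theorem-c-code-even} immediately gives the $\mathbb{C}_{2n}$ instance built from $S_E^\tau$; calling the pair twin instances is then just definitional. A cleaner alternative, which I would present if the relabeling proves cumbersome, is to show directly from Definition~\ref{definition-evenstarter} that $(S_E^\tau)^\tau = S_E$ (the twin operation is an involution, since the missing element of $S_E^\tau$ is $-r$ and shifting back by $-(-r)=r$ recovers $S_E$), so that it suffices to exhibit the symmetry once; but the substantive content — that the P1F property is preserved — still rests on the shift-automorphism argument above.
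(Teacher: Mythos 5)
Your overall strategy --- transport the bipyramidal P1F induced by $S_E$ through a graph automorphism that commutes with $\sigma$ and then invoke Theorem~\ref{theorem-c-code-even} --- is the right one, and it is essentially the paper's argument (its justification is that $S_E$ and $S_E^\tau$ ``induce the same bipyramidal one-factorization''). But you have identified the wrong automorphism, and your own bookkeeping exposes the problem. The shift $\phi\colon z\mapsto z-r$ is exactly $\sigma^{-r}$, so it does not produce a new one-factorization at all: $\phi(F_i)=F_{i-r}$, hence $\mathbb{F}^\tau:=\{\phi(F_i)\}=\mathbb{F}$. The factor of $\mathbb{F}^\tau$ containing $\{0,\infty_1\}$ is therefore still $F_0$ itself, and the starter extracted from it via Theorem~\ref{theorem-bipyramidal-1F-even-starter} is $S_E$, not $S_E^\tau$. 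Your final verification sentence is internally inconsistent on just this point: you correctly compute that $\phi(F_0)$ contains $\{-r,\infty_1\}$ and $\{0,\infty_2\}$, yet then assert that the two edges to be removed from it are $\{0,\infty_1\}$ and $\{-r,\infty_2\}$; these agree only after interchanging $\infty_1$ and $\infty_2$, which is precisely the operation your argument is missing.

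The repair is small. Since the element missing from $S_E^\tau$ is $-r$, the distinguished factor of the one-factorization induced by $S_E^\tau$ is $S_E^\tau\cup\left\{\{0,\infty_1\},\{-r,\infty_2\}\right\}$, which is exactly $\phi(F_0)$ with $\infty_1$ and $\infty_2$ swapped. So the one-factorization induced by $S_E^\tau$ is the image of $\mathbb{F}$ under the transposition $\infty_1\leftrightarrow\infty_2$ (the shift $\phi$ acting trivially on $\mathbb{F}$). That transposition is an automorphism of the $2n$-regular graph (it preserves the deleted perfect matching $\{\infty_1,\infty_2\},\{i,i+n\}$) and commutes with $\sigma$, so it carries the bipyramidal P1F $\mathbb{F}$ to a bipyramidal P1F, namely the one induced by $S_E^\tau$; equivalently, $S_E$ and $S_E^\tau$ are the two starters read off from one and the same unlabeled one-factorization by the two choices of which infinity vertex to call $\infty_1$. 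With that substitution your argument closes via Theorem~\ref{theorem-c-code-even}. Your fallback observation that the twin operation is an involution, $(S_E^\tau)^\tau=S_E$, is correct but, as you note, does not by itself carry the P1F-preservation content.
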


\indent The above conclusion can be easily understood because twin even starters $S_E$ and $S_E^\tau$ induce the same bipyramidal \mbox{one-factorization}.\\
\indent Now, we discuss how to construct a \mbox{C-Code} $\mathbb{C}_{2n}$ using even starters in $(Z_{2n},+)$. The steps to construct $\mathbb{C}_{2n}$ are to first find an even starter $S_E$ in $(Z_{2n},+)$ and then check whether the bipyramidal \mbox{one-factorization} $\mathbb{F}$ induced by $S_E$ is a P1F of a \mbox{$2n$-regular} graph on a set of $2n+2$ vertices $0,1,\cdots,2n-1,\infty_1,\infty_2$ (where the two vertices $\infty_1$ and $\infty_2$ are not adjacent to each other). Here, let \[\mathbb{F}=\{F_0,F_1,\cdots,F_{2n-1}\}.\] According to the cyclic symmetry of $\mathbb{F}$, it is clear that if $F_0\cup F_i$ is a Hamiltonian cycle for all $i$ from 1 to $n$, $\mathbb{F}$ is then a P1F. Thus, only $n$ rather than ${2n \choose 2}$ subgraphs need to be checked in determining whether $\mathbb{F}$ is a P1F. According to Theorem~\ref{theorem-c-code-even}, if $\mathbb{F}$ is a P1F, a $\mathbb{C}_{2n}$ instance
\begin{equation}
C_0=S_E
\end{equation}
can be constructed; otherwise we try other even starters in $(Z_{2n},+)$ until a \mbox{C-Code} is constructed, or all even starters in $(Z_{2n},+)$ have been checked.\\
\indent For example, the foregoing even starter $S_E=\left\{\{1,2\},\{3,5\} \right\}$ in $(Z_{6},+)$ induces a P1F of a \mbox{6-regular} graph on 8 vertices. Thus, a $\mathbb{C}_6$ instance illustrated as follows can be constructed using $S_E$:
\begin{center}
  \begin{tabular}{|c|c|c|c|c|c|}
  \hline
  $d_{1,2}$ & $d_{2,3}$ & $d_{3,4}$ & $d_{4,5}$ & $d_{5,0}$ & $d_{0,1}$ \\
  \hline
  $d_{3,5}$ & $d_{4,0}$ & $d_{5,1}$ & $d_{0,2}$ & $d_{1,3}$ & $d_{2,4}$ \\
  \hline
  \hline
  $p_0$ & $p_1$ & $p_2$ & $p_3$ & $p_4$ & $p_5$ \\
  \hline
  \end{tabular}
  .
\end{center}
At the same time, we can construct the twin $\mathbb{C}_6$ instance illustrated as follows using the twin even starter $S_E^\tau=\left\{\{3,4\},\{5,1\} \right\}$:
\begin{center}
  \begin{tabular}{|c|c|c|c|c|c|}
  \hline
  $d_{3,4}$ & $d_{4,5}$ & $d_{5,0}$ & $d_{0,1}$ & $d_{1,2}$ & $d_{2,3}$ \\
  \hline
  $d_{5,1}$ & $d_{0,2}$ & $d_{1,3}$ & $d_{2,4}$ & $d_{3,5}$ & $d_{4,0}$ \\
  \hline
  \hline
  $p_0$ & $p_1$ & $p_2$ & $p_3$ & $p_4$ & $p_5$ \\
  \hline
  \end{tabular}
  .
\end{center}

\begin{table}[H]
  \caption{Some examples of \mbox{C-Codes} for even lengths from 4 to 36.}\label{table-example}
    \begin{tabular}{|c|p{3.8in}|}
      \hline
      Length & $C_0$ (i.e. $S_E$) \\
      \hline
      \hline
      4 & $\left\{ \{1,2\} \right\}$ \\
      \hline
      6 & $\left\{ \{1,2\},\{3,5\} \right\}$ \\
      \hline
      8 & (Not Exist!) \\
      \hline
      10 & $\left\{ \{1,2\},\{3,5\},\{4,8\},\{6,9\} \right\}$ \\
      \hline
      12 & $\left\{ \{1,10\},\{2,6\},\{3,5\},\{4,9\},\{7,8\} \right\}$ \\
      \hline
      14 & $\left\{ \{1,2\},\{3,11\},\{4,6\},\{5,9\},\{7,10\},\{8,13\} \right\}$ \\
      \hline
      16 & $\left\{ \{1,2\},\{3,13\},\{4,15\},\{5,14\},\{6,8\},\{7,11\},\{9,12\}\right\}$ \\
      \hline
      18 & $\left\{ \{1,2\},\{3,7\},\{4,11\},\{5,15\},\{6,9\},\{8,13\},\{10,16\},\{12,14\} \right\}$ \\
      \hline
      20 & $\left\{ \{1,2\},\{3,5\},\{4,17\},\{6,14\},\{7,18\},\{8,13\},\{9,12\},\{10,16\},\{11,15\}\right\}$ \\
      \hline
      22 & $\left\{ \{1,2\},\{3,6\},\{4,12\},\{5,9\},\{7,13\},\{8,21\},\{10,20\},\{11,18\},\{14,19\}, \{15,17\}\right\}$ \\
      \hline
      24 & $\left\{ \{1,2\},\{3,5\},\{4,21\},\{6,11\},\{7,20\},\{8,12\},\{9,19\},\{10,16\},\{13,22\}, \{14,17\},\right.$
      $\left. \{15,23\}\right\}$ \\
      \hline
      26 & $\left\{ \{1,2\},\{3,6\},\{4,25\},\{5,19\},\{7,14\},\{8,24\},\{9,11\},\{10,18\},\{12,23\}, \{13,22\},\right.$
      $\left. \{15,21\},\{16,20\} \right\}$ \\
      \hline
      28 & $\left\{ \{1,2\},\{3,6\},\{4,25\},\{5,21\},\{7,11\},\{8,16\},\{9,18\},\{10,27\},\{12,22\}, \{13,26\},\right.$
      $\left. \{14,20\},\{15,17\},\{19,24\}\right\}$ \\
      \hline
      30 & $\left\{ \{1,2\},\{3,5\},\{4,9\},\{6,25\},\{7,13\},\{8,21\},\{10,24\},\{11,29\},\{12,16\},\{14,23\},\right.$
      $\left.\{15,22\},\{17,20\},\{18,28\}, \{19,27\} \right\}$ \\
      \hline
      32 & $\left\{ \{1,2\},\{3,5\},\{4,8\},\{6,27\},\{7,24\},\{9,21\},\{10,19\},\{11,29\},\{12,31\},\{13,18\},\right.$ $\left.\{14,17\},\{15,25\},\{16,22\},\{20,28\},\{23,30\} \right\}$ \\
      \hline
      34 & $\left\{ \{1,2\},\{3,5\},\{4,10\},\{6,25\},\{7,14\},\{8,32\},\{9,18\},\{11,22\},\{12,20\},\{13,26\},\right.$ $\left.\{15,33\},\{16,30\},\{17,21\},\{19,31\},\{23,28\},\{24,27\} \right\}$ \\
      \hline
      36 & $\left\{ \{1,2\},\{3,5\},\{4,8\},\{6,11\},\{7,20\},\{9,18\},\{10,34\},\{12,26\},\{13,28\},\{14,33\},\right.$ $\left.\{15,35\},\{16,22\},\{17,25\},\{19,29\},\{21,32\},\{23,30\},\{24,27\}\right\}$\\
      \hline
    \end{tabular}
\end{table}

\indent Finally, in the literature of graph theory, some bipyramidal P1Fs of a complete graph on $2n+2$ vertices (denoted by $K_{2n+2}$), which are induced by even starters in $(Z_{2n},+)$, have been found for the following values of $2n$: 4, 6, 10, 12, 14, 16, 18, 20, 22, 24, 26, 28, 30, 34, 38, and 50 \cite{P1F-list-(even)-starter-K32,P1F-K36-even-starter,P1F-K40-even-starter,P1F-K52-even-starter}. Here, a bipyramidal P1F $\mathbb{F}$ of $K_{2n+2}$ is induced by an even starter $S_E$ in $(Z_{2n},+)$ as follows: \[\mathbb{F}=\left\{F_0,F_1,\cdots,F_{2n-1},F_{2n}\right\},\] where $\left\{F_0,F_1,\cdots,F_{2n-1}\right\}$ is a bipyramidal P1F of a \mbox{$2n$-regular} graph on $2n+2$ vertices induced by $S_E$, and \[F_{2n}=\left\{\{0,n\},\{1,n+1\},\cdots,\{n-1,2n-1\},\{\infty_1,\infty_2\}\right\}.\] Thus, \mbox{C-Codes} for these values of $2n$ can be constructed using the corresponding even starters presented in \cite{P1F-K36-even-starter}, \cite{P1F-list-(even)-starter-K32}, \cite{P1F-K40-even-starter}, and \cite{P1F-K52-even-starter}. For example, a $\mathbb{C}_{50}$ instance can be constructed using the even starter presented in \cite{P1F-K52-even-starter}:
\begin{equation*}
    \begin{split}
      C_0= & \left\{\{2,29\}, \{3,35\}, \{4,16\}, \{5,33\}, \{6,43\}, \{7,15\},\right.\\
           & \left. \{8,19\}, \{9,30\}, \{10,41\}, \{11,46\}, \{12,17\}, \right.\\
           & \left. \{13,20\}, \{14,28\}, \{18,38\}, \{21,27\}, \{22,23\},  \right.\\
           & \left. \{24,48\},\{25,34\}, \{26, 36\}, \{31, 47\}, \{32, 49\}, \right.\\
           & \left. \{37, 39\}, \{40, 44\}, \{42, 45\} \right\}.
    \end{split}
\end{equation*}
Then, \mbox{C-Codes} for lengths 14, 20, 24, 26, 34, 38, and 50, which are not covered by the family of length $p-1$ presented in \cite{cyclic-code} (where $p$ is a prime), can be constructed here.\\
\indent It should be noted that an exhaustive search showed that a \mbox{C-Code} exists for most but not all of even lengths. For example, a \mbox{C-Code} exists for every even length from 4 to 36 except 8 (see Table~\ref{table-example}). Here, the length 32 is also not covered by the family of length $p-1$ presented in \cite{cyclic-code}. In the exhaustive search, for $2n=4,6,\cdots,36$, we first found an even starter $S_E$ in $(Z_{2n},+)$ according to Definition~\ref{definition-evenstarter} and then checked whether the bipyramidal \mbox{one-factorization} $\mathbb{F}$ induced by $S_E$ is a P1F of a \mbox{$2n$-regular} graph on $2n+2$ vertices using the foregoing efficient method presented in this section. If $\mathbb{F}$ is a P1F, a $\mathbb{C}_{2n}$ instance was constructed according to Theorem~\ref{theorem-c-code-even}; otherwise we tried other even starters in $(Z_{2n},+)$ until a \mbox{C-Code} was constructed, or all even starters in $(Z_{2n},+)$ had been checked (e.g. in the case of $2n=8$). \\
\indent It should also be noted that there are often more than one \mbox{C-Code} instances for a given even length. For example, besides the $\mathbb{C}_{34}$ instance listed in Table~\ref{table-example}, another $\mathbb{C}_{34}$ instance can be constructed using the even starter presented in \cite{P1F-K36-even-starter}:
\begin{equation*}
    \begin{split}
      C_0= & \left\{\{1,2\}, \{3, 5\}, \{4, 24\}, \{6, 9\}, \{7, 22\}, \{8, 18\}, \right.\\
           & \left. \{10, 17\}, \{12, 25\}, \{13, 21\}, \{14, 23\}, \{15, 31\}, \right.\\
           & \left. \{16, 28\}, \{19, 30\}, \{20, 26\},  \{27, 32\}, \{29, 33\} \right\}.\\
    \end{split}
\end{equation*}
These two $\mathbb{C}_{34}$ instances are not twin instances. In addition, even for a length 6 covered by the family of length $p-1$ presented in \cite{cyclic-code}, besides the instance $C_0=\left\{ \{1,3\},\{4,5\} \right\}$ constructed in \cite{cyclic-code}, we can find another instance in Table~\ref{table-example}. These two $\mathbb{C}_{6}$ instances are also not twin instances. Furthermore, as will be shown in the next section, there exist four families of $\mathbb{C}_{p-1}$ instances. Besides, Table~\ref{table-number-of-code} gives the number of \mbox{C-Codes} for even lengths from 4 to 30. These results are derived from \cite{P1F-list-(even)-starter-K32}. From this table, we can see that for most but not all of even lengths, the number of \mbox{C-Codes} increases with the length. We can also observe that the number of \mbox{C-Codes} for each length is always an even number. The reason is that each \mbox{C-Code} instance always has a twin instance (see Theorem~\ref{theorem-twin}).
\begin{table}[H]
  \caption{The number (\#) of \mbox{C-Codes} for even lengths from 4 to 30.}\label{table-number-of-code}
  \begin{tabular}{|c||c|c|c|c|c|c|c|}
    \hline
    Length & 4 & 6 & 8 & 10 & 12 & 14 & 16  \\
    \hline
    \# & 2 & 4 & 0 & 16 & 24 & 12 & 80  \\
    \hline
    \hline
    Length & 18 & 20 & 22 & 24 & 26 & 28 & 30 \\
    \hline
    \# & 120 & 272 & 440 & 576 & 2016 & 4992 & 11104 \\
    \hline
  \end{tabular}
\end{table}

\section{Four Infinite Families of $\mathbb{C}_{p-1}$ Instances}\label{section-c-code-family}
\indent In the previous section, we have obtained \mbox{C-Codes} for some sporadic even lengths listed as follows: \[4, 6, 10, 12, 14, 16, 18, 20, 22, 24, 26, 28, 30, 32, 34, 36, 38, 50.\] Then, we may wonder if there exists an infinite family of \mbox{C-Codes}. A positive answer to this question will be given in this section. Exactly speaking, this section will construct four infinite families of $\mathbb{C}_{p-1}$ instances.\\
\indent We first give the extended definition of an event starter as follows:
\begin{definition}[\cite{even-starter-first}]
Let $(A_{2n},\circ)$ be an abelian group (written multiplicatively) of order $2n$ with identity $e$ and unique element $a^*$ of order 2 (i.e. $a^*\circ a^*=e$). An \emph{even starter} $\widehat{S}_E$ in $(A_{2n},\circ)$ is a set of $n-1$ pairs of \mbox{non-identity} elements of $A_{2n}$, i.e. \[\widehat{S}_E=\left\{
\{x_1,y_1\},\{x_2,y_2\},\cdots,\{x_{n-1},y_{n-1}\} \right\},\] such that for every element $i\in A_{2n}$ such that $i\neq e$ and $i\neq a^*$, there exists a pair $\{x,y\}\in \widehat{S}_E$ such that $i=x^{-1}\circ y$, or $i=x\circ y^{-1}$.
\end{definition}

\indent Let $\left(Z^*_{p},\times\right)$ be a multiplicative group of congruence classes modulo $p$. For $p=7$, \[Z^*_{7}=\{1,2,\cdots,6\}.\] It is clear that $(Z^*_{7},\times)$ is an abelian group of order 6 with identity 1 and unique element 6 of order 2. Take $\widehat{S}_E=\left\{\{2,6\},\{3,5\}\right\}$ in $(Z^*_{7},\times)$ for example. For every \mbox{non-identity} element of $Z^*_{p}$ except 6, we have
\[
\left\{
  \begin{array}{ll}
    2=3\times 5^{-1} \bmod{7}; \\
    3=2^{-1}\times 6 \bmod{7}; \\
    4=3^{-1}\times 5 \bmod{7}; \\
    5=2\times 6^{-1} \bmod{7}. \\
  \end{array}
\right.
\]
Thus, $\widehat{S}_E$ is an even starter in $(Z^*_{7},\times)$.\\
\indent An even starter $\widehat{S}_E$ in $(A_{2n},\circ)$ induces a bipyramidal \mbox{one-factorization} of a \mbox{$2n$-regular} graph on $2n+2$ vertices as follows. Label the $2n+2$ vertices with the elements of $A_{2n}$ and two infinity elements $\infty_1$ and $\infty_2$ such that there is no edge between the following pairs of vertices: $\{\infty_1,\infty_2\}$ and all $\{a,a\circ a^*\}$ for $a\in A_{2n}$. Let
\begin{equation}\label{equation-expanded-even}
\overset{\sim}{S}_E=\widehat{S}_E\cup\left\{ \{e,\infty_1\},\{r,\infty_2\} \right\},
\end{equation}
where $r$ is the \mbox{non-identity} element that does not appear in $\widehat{S}_E$. For all $a\in A_{2n}$, define $a\circ\infty_1=\infty_1$ and $a\circ\infty_2=\infty_2$. The corresponding bipyramidal \mbox{one-factorization} $\mathbb{F}$ is then given by
\begin{equation}
\mathbb{F}=\left\{ a\circ \overset{\sim}{S}_E: a\in A_{2n} \right\},
\end{equation}
where \[a\circ \overset{\sim}{S}_E=\left\{\{a\circ x, a\circ y\}: \{x,y\}\in \overset{\sim}{S}_E \right\}.\] \\
\indent Here, if the bipyramidal \mbox{one-factorization} induced by $\widehat{S}_E$ in $(A_{2n},\circ)$ is a P1F, a \mbox{non-cyclic} \mbox{B-Code} of length $2n$, in which the \mbox{$a$-th} column ($a\in A_{2n}$) is $a\circ \widehat{S}_E$, can be constructed \cite{B-Code}.\\
\indent Take the foregoing even starter $\widehat{S}_E=\left\{\{2,6\},\{3,5\}\right\}$ in $(Z^*_{7},\times)$ for example. Since the corresponding bipyramidal \mbox{one-factorization} is a P1F, a \mbox{non-cyclic} \mbox{B-Code} of length 6 constructed using $\widehat{S}_E$ is illustrated as follows:
\begin{center}
  \begin{tabular}{|c|c|c|c|c|c|}
  \hline
  $d_{2,6}$ & $d_{4,5}$ & $d_{6,4}$ & $d_{1,3}$ & $d_{3,2}$ & $d_{5,1}$ \\
  \hline
  $d_{3,5}$ & $d_{6,3}$ & $d_{2,1}$ & $d_{5,6}$ & $d_{1,4}$ & $d_{4,2}$ \\
  \hline
  \hline
  $p_1$ & $p_2$ & $p_3$ & $p_4$ & $p_5$ & $p_6$ \\
  \hline
  \end{tabular}
  .
\end{center}
This code is the same as the \mbox{P-Code} of length 6 constructed in \cite{P-Code} (in fact, \mbox{P-Code} is just one family of the \mbox{B-Code} of length $p-1$, and its code structure was originally derived from \cite{ZZS}).\\
\indent We now consider the case where $(A_{2n},\circ)$ is a cyclic group of which $g$ is a generator. Then, in the \mbox{non-cyclic} \mbox{B-Code} of length $2n$ constructed using $\widehat{S}_E$ in $(A_{2n},\circ)$, the \mbox{$(g^i)$-th} column ($i=0,1,\cdots,2n-1$) can be expressed as $g^i\circ \widehat{S}_E$. For $i=0,1,\cdots,2n-1$, replace $g^i$ with $i$ and then relabel the \mbox{$(g^i)$-th} column with $i$. Reorder all the $2n$ columns in order according to their new labels. Then, a \mbox{C-Code} of length $2n$ is obtained.\\
\indent Take the foregoing \mbox{non-cyclic} \mbox{B-Code} of length 6 for example. Since $(Z^*_{7},\times)$ is a cyclic group of which 3 is a generator, the code can then be represented by
\begin{center}
  \begin{tabular}{|c|c|c|c|c|c|}
  \hline
  $d_{{3^2},{3^3}}$ & $d_{{3^4},{3^5}}$ & $d_{{3^3},{3^4}}$ & $d_{{3^0},{3^1}}$ & $d_{{3^1},{3^2}}$ & $d_{{3^5},{3^0}}$ \\
  \hline
  $d_{{3^1},{3^5}}$ & $d_{{3^3},{3^1}}$ & $d_{{3^2},{3^0}}$ & $d_{{3^5},{3^3}}$ & $d_{{3^0},{3^4}}$ & $d_{{3^4},{3^2}}$ \\
  \hline
  \hline
  $p_{3^0}$ & $p_{3^2}$ & $p_{3^1}$ & $p_{3^4}$ & $p_{3^5}$ & $p_{3^3}$ \\
  \hline
  \end{tabular}
  .
\end{center}
In the above representation, replace $3^i$ with $i$ for $i=0,1,\cdots,5$ and then reorder all the 6 columns in order according to their new labels. We can obtain a $\mathbb{C}_6$ instance as follows:
\begin{center}
  \begin{tabular}{|c|c|c|c|c|c|}
  \hline
  $d_{2,3}$ & $d_{3,4}$ & $d_{4,5}$ & $d_{5,0}$ & $d_{0,1}$ & $d_{1,2}$ \\
  \hline
  $d_{1,5}$ & $d_{2,0}$ & $d_{3,1}$ & $d_{4,2}$ & $d_{5,3}$ & $d_{0,4}$ \\
  \hline
  \hline
  $p_0$ & $p_1$ & $p_2$ & $p_3$ & $p_4$ & $p_5$ \\
  \hline
  \end{tabular}
  .
\end{center}

\indent In a cyclic group (written multiplicatively) of which $g$ is a generator, for $x=g^i$, define $\log_g{(x)}=i$. Then, we can easily make the following conclusion:
\begin{theorem}
If $(A_{2n},\circ)$ is a cyclic group of which $g$ is a generator, a \mbox{non-cyclic} \mbox{B-Code} of length $2n$ constructed using an even starter $\widehat{S}_E$ in $(A_{2n},\circ)$ can always be transformed to a $\mathbb{C}_{2n}$ instance
\begin{equation}\label{equation-transform-c-code}
C_0=\left\{\left\{\log_g{(x)},\log_g{(y)}\right\}:\{x,y\}\in\widehat{S}_E\right\}.
\end{equation}
At the same time, we can get the twin $\mathbb{C}_{2n}$ instance
\begin{equation}\label{equation-twin-c-code}
C_0^\tau=\left\{\{x^*-r^*,y^*-r^*\}: \{x^*,y^*\}\in C_0 \right\},
\end{equation}
where $r^*$ is the one and only \mbox{non-zero} element of $(Z_{2n},+)$ that does not occur in $C_0$.
\end{theorem}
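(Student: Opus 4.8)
The plan is to reduce everything to the group isomorphism $\phi=\log_g\colon(A_{2n},\circ)\to(Z_{2n},+)$ together with Theorems~\ref{theorem-c-code-even} and~\ref{theorem-twin}. First I would note that $g^i\circ g^j=g^{i+j}$ makes $\phi$ a group isomorphism carrying the identity $e$ to $0$ and the unique order-$2$ element $a^*$ to $n$, and restricting to a bijection from the non-identity elements of $A_{2n}$ onto the non-zero elements of $Z_{2n}$. Applying $\phi$ to each pair of $\widehat{S}_E$ produces $C_0$, a set of $n-1$ pairs of non-zero elements of $Z_{2n}$; the difference condition of Definition~\ref{definition-evenstarter} for $C_0$ is then a one-line translation of the multiplicative even-starter condition for $\widehat{S}_E$ (if $i=\phi(j)$ with $j\neq e,a^*$ and $j=x^{-1}\circ y$, then $i=(\phi(y)-\phi(x))\bmod 2n$, and symmetrically for $j=x\circ y^{-1}$). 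So $C_0$ is an even starter in $(Z_{2n},+)$, and the one and only non-zero element $r^*$ it omits is $\phi(r)$, where $r$ is the non-identity element omitted by $\widehat{S}_E$.

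Next I would extend $\phi$ to $\widehat{\phi}$ by fixing $\infty_1$ and $\infty_2$, and check that $\widehat{\phi}$ is an isomorphism between the two $2n$-regular host graphs on $2n+2$ vertices: it sends the non-edges $\{\infty_1,\infty_2\}$ and $\{a,a\circ a^*\}$ of the multiplicative graph to the non-edges $\{\infty_1,\infty_2\}$ and $\{\phi(a),\phi(a)+n\}$ of the additive one. Since $\widehat{\phi}(a\circ x)=\phi(a)+\widehat{\phi}(x)$ under the conventions $a\circ\infty_k=\infty_k$ and $i+\infty_k=\infty_k$, the map $\widehat{\phi}$ sends the expanded set $\overset{\sim}{S}_E$ of~(\ref{equation-expanded-even}) to $C_0\cup\{\{0,\infty_1\},\{r^*,\infty_2\}\}$ and sends each factor $g^i\circ\overset{\sim}{S}_E$ to its image under $\sigma^i$, where $\sigma=(0\ 1\ \cdots\ 2n-1)(\infty_1)(\infty_2)$. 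Hence $\widehat{\phi}$ carries the bipyramidal one-factorization $\mathbb{F}$ induced by $\widehat{S}_E$ onto the bipyramidal one-factorization induced by the even starter $C_0$ in $(Z_{2n},+)$. Because a graph isomorphism preserves Hamiltonicity of unions of pairs of one-factors, and $\mathbb{F}$ is a P1F --- which is exactly the hypothesis that the non-cyclic B-Code exists --- the one-factorization induced by $C_0$ is also a P1F; Theorem~\ref{theorem-c-code-even} then yields a $\mathbb{C}_{2n}$ instance with first column $C_0$, i.e. with $i$-th column $\{\{i+u,i+v\}:\{u,v\}\in C_0\}$ by~(\ref{equation-cyclic}). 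To see that this is the code obtained from the relabel-and-reorder procedure, I would simply compute that relabelling vertex $g^j$ by $j$ turns the B-Code column $g^i\circ\widehat{S}_E$ into $\{\{i+u,i+v\}:\{u,v\}\in C_0\}$, so that reordering columns by their new labels reproduces exactly that $\mathbb{C}_{2n}$ instance; this is~(\ref{equation-transform-c-code}).

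For the twin statement, once $C_0$ is known to be an even starter in $(Z_{2n},+)$ that constructs a $\mathbb{C}_{2n}$ instance, Theorem~\ref{theorem-twin} says its twin even starter also constructs a $\mathbb{C}_{2n}$ instance (the twin $\mathbb{C}_{2n}$ instance), and Definition~\ref{definition-evenstarter} identifies that twin even starter as $\{\{x^*-r^*,y^*-r^*\}:\{x^*,y^*\}\in C_0\}$ with $r^*$ the unique non-zero element absent from $C_0$ --- which is~(\ref{equation-twin-c-code}).

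I expect the only genuinely delicate point to be the middle step: being fully precise about the identification of $\widehat{\phi}(\mathbb{F})$ with the ``bipyramidal one-factorization induced by $C_0$'', i.e. matching the multiplicative translation $a\circ(-)$ with the rotation $\sigma^{\log_g a}$, matching $\overset{\sim}{S}_E$ with the canonical expansion $C_0\cup\{\{0,\infty_1\},\{r^*,\infty_2\}\}$, and confirming the non-edge structures correspond, so that the word ``perfect'' transfers without loss. The rest is routine bookkeeping with the isomorphism. Alternatively one could bypass the B-Code graph description and argue directly that the rotated family $\{\,i+C_0:i\in Z_{2n}\}$ satisfies Condition~\ref{condition} because $\{\,a\circ\widehat{S}_E:a\in A_{2n}\}$ does (the B-Code being MDS); this is the same argument in different dress.
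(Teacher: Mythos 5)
Your proposal is correct and follows essentially the same route as the paper: the paper's (informal) justification is precisely the relabel-by-$\log_g$-and-reorder argument, and your proof is a careful formalization of it, transporting the even-starter property and the P1F through the isomorphism $\log_g$ and then invoking Theorems~\ref{theorem-c-code-even} and~\ref{theorem-twin}. The only difference is rigor: the paper merely illustrates the transformation on the $(Z^*_7,\times)$ example and asserts the conclusion, whereas you verify the non-edge correspondence and the transfer of perfectness explicitly.
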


\indent Specially, we consider even starters in $\left(Z^*_{p},\times\right)$, where $p$ is a prime. It is \mbox{well-known} that when $p$ is a prime, $\left(Z^*_{p},\times\right)$ is a cyclic group in which \[Z^*_{p}=\{1,2,\cdots,p-1\}.\] Thus, we can make the following conclusion:
\begin{theorem}\label{theorem-even-example}
A \mbox{non-cyclic} \mbox{B-Code} of length $p-1$ constructed using an even starter in $\left(Z^*_{p},\times\right)$ can always be transformed to a \mbox{C-Code} of length $p-1$.
\end{theorem}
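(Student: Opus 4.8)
The plan is to observe that Theorem~\ref{theorem-even-example} is an immediate specialization of the preceding (unnumbered) theorem that transforms a \mbox{non-cyclic} \mbox{B-Code} constructed from an even starter in an arbitrary cyclic group $(A_{2n},\circ)$ into a $\mathbb{C}_{2n}$ instance via Equation~(\ref{equation-transform-c-code}). So the work reduces to checking that $\left(Z^*_{p},\times\right)$ with $p$ prime meets the hypotheses of that theorem, namely that it is (i) an abelian group of \emph{even} order $2n = p-1$ with a unique element of order~2, so that ``even starter'' is even defined, and (ii) a \emph{cyclic} group, so that the logarithm map is available.

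First I would recall the standard facts: for $p$ prime, $Z^*_{p}=\{1,2,\cdots,p-1\}$ is a group under multiplication modulo $p$ of order $p-1$, and it is cyclic (this is the existence of a primitive root modulo a prime, which I would cite as well-known). Since $p$ is an odd prime (the case $p=2$ gives the trivial group and is excluded because the length must be at least~2, i.e.\ $2n\geq 4$), the order $p-1$ is even, so we may write $p-1=2n$; and the unique element of order~2 is $a^* = p-1 \equiv -1 \pmod p$, which is the unique nontrivial solution of $a^2 \equiv 1 \pmod p$ because $\mathbb{Z}/p\mathbb{Z}$ is a field. This is exactly the setting in which an even starter $\widehat{S}_E$ in $\left(Z^*_{p},\times\right)$ was defined in the extended definition above.

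Next, I would invoke the construction already spelled out in the excerpt: given an even starter $\widehat{S}_E$ in $\left(Z^*_{p},\times\right)$ whose induced bipyramidal \mbox{one-factorization} is a P1F, the corresponding \mbox{non-cyclic} \mbox{B-Code} of length $p-1$ has $a$-th column $a\circ\widehat{S}_E$ for $a\in Z^*_{p}$ (by the \mbox{B-Code} construction of \cite{B-Code}). Fixing a generator $g$ of $\left(Z^*_{p},\times\right)$ and applying the map $x\mapsto\log_g(x)$ relabels the columns by $0,1,\cdots,p-2$ and, after reordering, produces the array whose first column is $C_0=\left\{\left\{\log_g(x),\log_g(y)\right\}:\{x,y\}\in\widehat{S}_E\right\}$. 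By the previous theorem this array is a genuine $\mathbb{C}_{p-1}$ instance, which is precisely the claim.

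There is essentially no obstacle here; the only point that deserves a sentence of care is why the relabelled array is cyclic, i.e.\ why the column indexed by $i$ equals $\{\,\{x+i\bmod(p-1),\,y+i\bmod(p-1)\}:\{x,y\}\in C_0\,\}$ in the sense of Equation~(\ref{equation-cyclic}). This follows because multiplication by $g$ in $\left(Z^*_{p},\times\right)$ corresponds under $\log_g$ to adding $1$ in $\left(Z_{p-1},+\right)$, so the multiplicative cyclic action of $\langle g\rangle$ on the columns of the \mbox{B-Code} becomes the additive shift action, which is exactly the cyclic symmetry required in Definition~\ref{definition-cyclic-code}. The remaining conditions of Definition~\ref{definition-cyclic-code} (weight of each row, nonsingularity of every $(H_m\ H_k)$) are inherited from the \mbox{B-Code} / P1F structure and were already handled when establishing the general transformation theorem, so the proof of Theorem~\ref{theorem-even-example} is just the remark that $\left(Z^*_{p},\times\right)$ is a cyclic group of even order with a unique involution. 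I would write it in two or three lines accordingly.
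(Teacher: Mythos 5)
Your proposal is correct and follows exactly the paper's route: the paper derives Theorem~\ref{theorem-even-example} as an immediate specialization of the preceding unnumbered transformation theorem, justified solely by the well-known fact that $\left(Z^*_{p},\times\right)$ is a cyclic group of order $p-1$ when $p$ is prime. Your extra remarks (that $p-1$ is even with unique involution $p-1\equiv -1$, and that $\log_g$ converts multiplication by $g$ into the additive shift required by Equation~(\ref{equation-cyclic})) only make explicit what the paper leaves implicit.
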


\indent Finally, in $\left(Z^*_{p},\times\right)$, there exist two infinite families of even starters \cite{2-even-starter-induced-P1Fs} as follows:
\begin{equation}\label{equation-even-starter1}
\widehat{S}_E^A=\left\{\{x,y\}: x,y\in Z^*_{p}\setminus\{1,2^{-1}\}, x+y=1\right\},
\end{equation}
and
\begin{equation}\label{equation-even-starter2}
\widehat{S}_E^B= \left\{\{x,y\}: x,y\in Z^*_{p}\setminus\{1,2^{-1},2,p-1\}, x+y=1\right\}\bigcup\left\{\{2^{-1},p-1\}\right\}.
\end{equation}
It was proved in \cite{2-even-starter-induced-P1Fs} that $\widehat{S}_E^A$ and $\widehat{S}_E^B$ can induce two families of \mbox{non-isomorphic} bipyramidal P1Fs of $K_{p+1}$, respectively. Note that the P1F induced by $\widehat{S}_E^A$ is isomorphic to the \mbox{well-known} \emph{patterned P1F} (induced by the \mbox{well-known} \emph{patterned starter} in $(Z_p,+)$) \cite{P1F-2-families}, which has been used to construct the family of \mbox{non-cyclic} \mbox{B-Codes} of length $p-1$ in \cite{P-Code}, \cite{B-Code}, and \cite{ZZS}. Thus, two families of \mbox{non-cyclic} \mbox{B-Codes} of length $p-1$ can be constructed using $\widehat{S}_E^A$ and $\widehat{S}_E^B$, respectively. \\
\indent Suppose $g$ is a generator of $\left(Z^*_{p},\times\right)$. Then, two families of $\mathbb{C}_{p-1}$ instances
\begin{equation}\label{equation-even-starter1-code}
C_0^A=\left\{\left\{\log_g{(x)},\log_g{(y)}\right\}: x,y\in Z^*_{p}\setminus\{1,2^{-1}\}, x+y=1\right\}
\end{equation}
and
\begin{equation}\label{equation-even-starter2-code}
C_0^B= \left\{\left\{\log_g{(x)},\log_g{(y)}\right\}: x,y\in Z^*_{p}\setminus\{1,2^{-1},2,p-1\}, x+y=1\right\}\bigcup\left\{\{2^{-1},p-1\}\right\}
\end{equation}
and their twin instances can be constructed. Therefore, there exist four families of $\mathbb{C}_{p-1}$ instances.\\
\indent Take $(Z^*_{7},\times)$ for example. We then have $\widehat{S}_E^A=\left\{\{2,6\},\{3,5\}\right\}$ and $\widehat{S}_E^A=\left\{\{3,5\},\{4,6\}\right\}$. In $\left(Z^*_{7},\times\right)$, pick $g=3$. From $\widehat{S}_E^A$, we obtain a $\mathbb{C}_{6}$ instance \[C_0^A=\left\{\{2,3\},\{1,5\}\right\}\] and its twin instance \[\left(C_0^A\right)^\tau=\left\{\{4,5\},\{3,1\}\right\}.\] Here, the twin instance is the same as the instance constructed in \cite{cyclic-code}. Also, from $\widehat{S}_E^B$, we obtain a $\mathbb{C}_{6}$ instance \[C_0^B=\left\{\{1,5\},\{4,3\}\right\}\] and its twin instance \[\left(C_0^B\right)^\tau=\left\{\{5,3\},\{2,1\}\right\}.\] Thus, we construct four $\mathbb{C}_{6}$ instances.\\
\indent From the above results, we can make two observations as follows:
\begin{enumerate}
  \item \mbox{Non-cyclic} \mbox{B-Codes} of length $p-1$ constructed in \cite{P-Code}, \cite{B-Code}, and \cite{ZZS} can always be transformed to $\mathbb{C}_{p-1}$ instances; and
  \item The family of $\mathbb{C}_{p-1}$ instances constructed in \cite{cyclic-code} can also be obtained from $\widehat{S}_E^A$.
\end{enumerate}

\section{Constructing a \mbox{Quasi-C-Code} Using Even \mbox{Multi-Starters}}\label{section-quasi-c-code}
As mentioned in Section~\ref{section-c-code}, our exhaustive search showed that there is no \mbox{C-Code} for some even lengths, such as 8. Then, one question is: \emph{Can we construct \mbox{quasi-C-Codes} (which partially hold cyclic symmetry \cite{cyclic-code}) for these even lengths?} In this paper, we say a \mbox{quasi-C-Code} of length $2n$ is a \emph{\mbox{$\kappa$-quasi-C-Code}} (denoted by $\mathbb{C}_{2n}^\kappa$, where $\kappa|2n$) if for $i=0,1,\cdots,\kappa-1$, each group of $\frac{2n}{\kappa}$ columns \[C_{i+\kappa\times 0},C_{i+\kappa\times 1},\cdots,C_{i+\kappa\times(\frac{2n}{\kappa}-1)}\] hold cyclic symmetry, where $C_j$ represents the $j$-th column of the code.\\
\indent In this section, we will introduce a concept of even \mbox{multi-starters} and then discuss how to construct \mbox{quasi-C-Codes} using even \mbox{multi-starters}.\\
\indent An even \mbox{$\kappa$-starter} in $(Z_{2n},+)$ (where $\kappa|2n$) is defined as follows:

\begin{definition}\label{definition-multistarter}
An even \emph{\mbox{$\kappa$-starter}} $S^{\kappa}$ in $(Z_{2n},+)$ (where $\kappa|2n$) is a set \[S^{\kappa}=\left\{ S_0, S_1, \cdots, S_{\kappa-1} \right\},\] where $S_i$ ($i=0,1,\cdots,\kappa-1$) is a set of $n-1$ pairs of \mbox{non-$i$} elements of $Z_{2n}$, such that every integer from 1 to $n-1$ occurs $\kappa$ times as a difference of a pair of $S^{\kappa}$. Its \emph{twin even \mbox{$\kappa$-starter}} ${(S^{\kappa})}^\tau$ is defined as
\begin{equation}\label{equation-twin-even-multi-starter}
    {(S^{\kappa})}^\tau=\left\{ S'_{r_i \bmod{\kappa}}=S_i-\kappa\left\lfloor\frac{r_i}{\kappa}\right\rfloor: i=0,1,\cdots,\kappa-1 \right\},
\end{equation}
where $r_i$ is the \mbox{non-$i$} element that does not appear in $S_i$, and \[S_i-\kappa\left\lfloor\frac{r_i}{\kappa}\right\rfloor=\left\{\left\{x-\kappa\left\lfloor\frac{r_i}{\kappa}\right\rfloor,y-\kappa\left\lfloor\frac{r_i}{\kappa}\right\rfloor\right\}:\{x,y\}\in S_i\right\}.\]
\end{definition}

\indent Take $S^2=\left\{ S_0, S_1\right\}$ in $(Z_8,+)$ for example, where $S_0=\left\{ \{1,2\}, \{3,5\}, \{4,6\} \right\}$, and $S_1=\left\{ \{0,3\}, \{2,7\}, \{4,5\} \right\}$. For every integer from 1 to 3, we have
\[
\left\{
  \begin{array}{ll}
    1=2-1=5-4 \bmod{8}; \\
    2=5-3=6-4 \bmod{8}; \\
    3=3-0=2-7 \bmod{8}. \\
  \end{array}
\right.
\]
Thus, $S^2$ is an even \mbox{2-starter} in $(Z_8,+)$. Its twin even \mbox{2-starter} is ${(S^2)}^\tau=\left\{ S'_0, S'_1\right\}$, where $S'_0=\left\{ \{2,5\}, \{4,1\}, \{6,7\} \right\}$, and $S'_1=\left\{ \{3,4\}, \{5,7\}, \{6,0\} \right\}$. \\
\indent An even \mbox{$\kappa$-starter} \[S^{\kappa}=\left\{ S_0, S_1, \cdots, S_{\kappa-1} \right\}\] in $(Z_{2n},+)$ (where $\kappa|2n$) induces a \mbox{one-factorization} of a \mbox{$2n$-regular} graph on $2n+2$ vertices as follows. Label these $2n+2$ vertices with the elements of $Z_{2n}$ and two infinity elements $\infty_1$ and $\infty_2$ such that there is no edge between the following pairs of vertices: $\{\infty_1,\infty_2\}$
and all $\{i,i+n\}$ for $i=0,1,\cdots,n-1$. For every $z\in Z_{2n}$, define $z+\infty_1=\infty_1$ and $z+\infty_2=\infty_2$. For
$i=0,1,\cdots,\kappa-1$, let
\begin{equation}
\overset{\sim}{S}_i=S_i\cup\left\{ \{i,\infty_1\}, \{r_i,\infty_2\} \right\},
\end{equation}
where $r_i$ is the \mbox{non-$i$} element that does not appear in $S_i$. The corresponding \mbox{one-factorization} $\mathbb{F}^{\kappa}$ is then given by
\begin{equation}
\mathbb{F}^{\kappa}= \left\{\kappa\tau+\overset{\sim}{S}_0,\kappa\tau+\overset{\sim}{S}_1,\cdots,\kappa\tau+\overset{\sim}{S}_{\kappa-1}: \tau=0,1,\cdots,\frac{2n}{\kappa}-1\right\},
\end{equation}
where \[\kappa\tau+\overset{\sim}{S}_i=\left\{ \{\kappa\tau+x,\kappa\tau+y\}:\{x,y\}\in \overset{\sim}{S}_i \right\}\] for $i=0,1,\cdots,\kappa-1$. Such a \mbox{one-factorization} is called a \emph{\mbox{$\kappa$-quasi-bipyramidal}  \mbox{one-factorization}}.\\
\indent For an even \mbox{$\kappa$-starter} \[S^{\kappa}=\left\{ S_0, S_1, \cdots, S_{\kappa-1} \right\}\] in $(Z_{2n},+)$ (where $\kappa|2n$), if the \mbox{$\kappa$-quasi-bipyramidal} \mbox{one-factorization} $\mathbb{F}^{\kappa}$ induced by $S^{\kappa}$ is a P1F of a \mbox{$2n$-regular} graph on $2n+2$ vertices, a $\mathbb{C}_{2n}^\kappa$ instance, in which the \mbox{$i$-th} column
($i=0,1,\cdots,2n-1$) is
\begin{equation}
C_i=  \left\{ \left\{x+\kappa\left\lfloor\frac{i}{\kappa}\right\rfloor \bmod{2n},y+\kappa\left\lfloor\frac{i}{\kappa}\right\rfloor \bmod{2n}\right\}: \{x,y\}\in S_{i \bmod{\kappa}}\right\},
\end{equation}
can be constructed using $S^{\kappa}$. It can be easily checked that in this $\mathbb{C}_{2n}^\kappa$ instance, for $i=0,1,\cdots,\kappa-1$, each group of $\frac{2n}{\kappa}$ columns \[C_{i+\kappa\times 0},C_{i+\kappa\times 1},\cdots,C_{i+\kappa\times(\frac{2n}{\kappa}-1)}\] hold cyclic symmetry. \\
\indent Similar to Theorem~\ref{theorem-twin} in Section~\ref{section-c-code}, we give the following theorem:
\begin{theorem}\label{theorem-twin-quasi}
If a $\mathbb{C}_{2n}^\kappa$ instance can be constructed using an even \mbox{$\kappa$-starter} $S^{\kappa}$ in $(Z_{2n},+)$, another $\mathbb{C}_{2n}^\kappa$ instance can also be constructed using the twin even \mbox{$\kappa$-starter} $(S^{\kappa})^\tau$. They are called \emph{twin $\mathbb{C}_{2n}^\kappa$ instances}.
\end{theorem}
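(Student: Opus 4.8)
The argument parallels the one-line justification given for Theorem~\ref{theorem-twin}: the plan is to show that, up to the relabelling that interchanges the two infinity vertices, the twin even $\kappa$-starter $(S^{\kappa})^\tau$ induces the very same $\kappa$-quasi-bipyramidal one-factorization as $S^{\kappa}$. Write $\psi$ for the permutation of the vertex set that swaps $\infty_1$ and $\infty_2$ and fixes $0,1,\dots,2n-1$. Since $\psi$ stabilises the set of forbidden non-edges (namely $\{\infty_1,\infty_2\}$ and all $\{i,i+n\}$), it is an automorphism of the underlying $2n$-regular graph, so it carries Hamiltonian cycles to Hamiltonian cycles and hence P1Fs to P1Fs. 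Thus once I establish that $(\mathbb{F}^{\kappa})^\tau=\psi(\mathbb{F}^{\kappa})$, the hypothesis that $S^{\kappa}$ yields a $\mathbb{C}_{2n}^\kappa$ instance — i.e.\ that $\mathbb{F}^{\kappa}$ is a P1F — immediately gives that $(\mathbb{F}^{\kappa})^\tau$ is a P1F, and the construction recalled just before this theorem turns it into the twin $\mathbb{C}_{2n}^\kappa$ instance.

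First I would dispose of well-definedness. Writing $r_i$ for the missing non-$i$ element of $S_i$ and $\widetilde{S}_i=S_i\cup\{\{i,\infty_1\},\{r_i,\infty_2\}\}$, the $2n$ one-factors of $\mathbb{F}^{\kappa}$ are the blocks $\kappa\tau+\widetilde{S}_i$, whose unique $\infty_2$-edge is $\{\kappa\tau+r_i,\infty_2\}$. Because $S^{\kappa}$ constructs a code, $\mathbb{F}^{\kappa}$ is in particular a one-factorization, so across its $2n$ blocks the $\infty_2$-edges use each vertex exactly once; since $\kappa\tau+r_i\equiv r_i\pmod{\kappa}$, this forces $i\mapsto r_i\bmod\kappa$ to be a bijection of $\{0,\dots,\kappa-1\}$, which is exactly what makes the indexing $(S^{\kappa})^\tau=\{S'_0,\dots,S'_{\kappa-1}\}$ legitimate. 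The difference condition for $(S^{\kappa})^\tau$ then needs no work, since each $S'_j$ is a translate of some $S_i$ and translation preserves differences; so $(S^{\kappa})^\tau$ is a genuine even $\kappa$-starter.

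The computational core is short. Fix $i$, put $j_i=r_i\bmod\kappa$ and $q_i=\lfloor r_i/\kappa\rfloor$, so $S'_{j_i}=S_i-\kappa q_i$. Since $S_i$ omits precisely the elements $i$ and $r_i$, its translate $S'_{j_i}$ omits precisely $i-\kappa q_i$ and $r_i-\kappa q_i=j_i$, so the residual non-$j_i$ element of $S'_{j_i}$ is $i-\kappa q_i$ (and $q_i\ge1$ whenever $i=j_i$, as $r_i\neq i$, so this never collides with $j_i$). Using $\infty_k+z=\infty_k$ I would then read off
\[ \widetilde{S'}_{j_i}=S'_{j_i}\cup\bigl\{\{j_i,\infty_1\},\{i-\kappa q_i,\infty_2\}\bigr\}=\bigl(S_i\cup\{\{r_i,\infty_1\},\{i,\infty_2\}\}\bigr)-\kappa q_i=\psi\bigl(\widetilde{S}_i-\kappa q_i\bigr), \]
i.e.\ the $j_i$-th expanded block of $(S^{\kappa})^\tau$ is the $i$-th expanded block of $S^{\kappa}$ with $\infty_1$ and $\infty_2$ swapped and then translated by $-\kappa q_i$. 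Developing both sides and using that $\psi$ commutes with translation by finite elements and that, for fixed $i$, $(\tau-q_i)\bmod\frac{2n}{\kappa}$ runs over the same index set as $\tau$, I obtain $(\mathbb{F}^{\kappa})^\tau=\psi(\mathbb{F}^{\kappa})$, which closes the argument.

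I expect the only genuinely fiddly point to be the index bookkeeping in this last development step: one must check that $(i,\tau)\mapsto\bigl(j_i,(\tau+q_i)\bmod\frac{2n}{\kappa}\bigr)$ is a bijection of $\{0,\dots,\kappa-1\}\times\{0,\dots,\frac{2n}{\kappa}-1\}$ — but this is just the base-$\kappa$ reading of the bijection $(i,\tau)\mapsto\kappa\tau+r_i$ already produced in the well-definedness step — and that reducing $\kappa(\tau-q_i)$ modulo $2n$ leaves the translated block unchanged. Everything else is immediate once the $\infty_1\!\leftrightarrow\!\infty_2$ swap in the previous display is spotted; that swap is precisely the reason the twin one-factorization equals the original only after relabelling. (If, as in the phrasing of Theorem~\ref{theorem-twin}, one regards $\infty_1$ and $\infty_2$ as interchangeable labels, then $\psi$ may simply be dropped and the two induced $\kappa$-quasi-bipyramidal one-factorizations are literally the same.)
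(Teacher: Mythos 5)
Your proposal is correct and follows exactly the route the paper takes, which it states in a single sentence ("twin even $\kappa$-starters induce the same $\kappa$-quasi-bipyramidal one-factorization"); you simply supply the verification the paper omits. Your one refinement is also right: the computation $\widetilde{S'}_{j_i}=\psi\bigl(\widetilde{S}_i-\kappa q_i\bigr)$ shows the two induced one-factorizations coincide only after the relabelling $\psi$ interchanging $\infty_1$ and $\infty_2$ (an automorphism of the underlying graph, so perfection is preserved), a point the paper's phrasing glosses over in both Theorem~\ref{theorem-twin} and Theorem~\ref{theorem-twin-quasi}.
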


\indent The above conclusion can also be easily understood because twin even \mbox{$\kappa$-starters} $S^{\kappa}$ and $(S^{\kappa})^\tau$ induce the same \mbox{$\kappa$-quasi-bipyramidal} one-factorization.\\
\indent For example, the foregoing even \mbox{2-starter} $S^2=\left\{ S_0, S_1\right\}$ (where $S_0=\left\{ \{1,2\}, \{3,5\}, \{4,6\} \right\}$, and $S_1=\left\{ \{0,3\}, \{2,7\}, \{4,5\} \right\}$) in $(Z_8,+)$ induces a \mbox{2-quasi-bipyramidal} P1F of a \mbox{8-regular} graph on 10 vertices. Thus, a $\mathbb{C}_{8}^2$ instance illustrated as follows can be constructed using $S^2$:
\begin{center}
  \begin{tabular}{|c|c|c|c|c|c|c|c|}
  \hline
  $d_{1,2}$ & $d_{0,3}$ & $d_{3,4}$ & $d_{2,5}$ & $d_{5,6}$ & $d_{4,7}$ & $d_{7,0}$ & $d_{6,1}$ \\
  \hline
  $d_{3,5}$ & $d_{2,7}$ & $d_{5,7}$ & $d_{4,1}$ & $d_{7,1}$ & $d_{6,3}$ & $d_{1,3}$ & $d_{0,5}$ \\
  \hline
  $d_{4,6}$ & $d_{4,5}$ & $d_{6,0}$ & $d_{6,7}$ & $d_{0,2}$ & $d_{0,1}$ & $d_{2,4}$ & $d_{2,3}$ \\
  \hline
  \hline
  $p_0$ & $p_1$ & $p_2$ & $p_3$ & $p_4$ & $p_5$ & $p_6$ & $p_7$ \\
  \hline
  \end{tabular}
  .
\end{center}
At the same time, we can construct the twin $\mathbb{C}_{8}^2$ instance illustrated as follows using the twin even \mbox{2-starter} ${(S^2)}^\tau=\left\{ S'_0, S'_1\right\}$, where $S'_0=\left\{ \{2,5\}, \{4,1\}, \{6,7\} \right\}$, and $S'_1=\left\{ \{3,4\}, \{5,7\}, \{6,0\} \right\}$:
\begin{center}
  \begin{tabular}{|c|c|c|c|c|c|c|c|}
  \hline
  $d_{2,5}$ & $d_{3,4}$ & $d_{4,7}$ & $d_{5,6}$ & $d_{6,1}$ & $d_{7,0}$ & $d_{0,3}$ & $d_{1,2}$ \\
  \hline
  $d_{4,1}$ & $d_{5,7}$ & $d_{6,3}$ & $d_{7,1}$ & $d_{0,5}$ & $d_{1,3}$ & $d_{2,7}$ & $d_{3,5}$ \\
  \hline
  $d_{6,7}$ & $d_{6,0}$ & $d_{0,1}$ & $d_{0,2}$ & $d_{2,3}$ & $d_{2,4}$ & $d_{4,5}$ & $d_{4,6}$ \\
  \hline
  \hline
  $p_0$ & $p_1$ & $p_2$ & $p_3$ & $p_4$ & $p_5$ & $p_6$ & $p_7$ \\
  \hline
  \end{tabular}
  .
\end{center}

\section{Two Infinite Families of $\mathbb{C}_{2(p-1)}^2$ Instances}\label{section-c-code-family-quasi}
\indent In this section, we will construct two infinite families of $\mathbb{C}_{2(p-1)}^2$ instances. We start with an infinite family of even \mbox{2-starter} $S^2$ and its twin even \mbox{2-starter} $\left(S^2\right)^\tau$ in $\left(Z_{2(p-1)},+\right)$. \\
\indent Suppose $g$ is a generator of $\left(Z^*_{p},\times\right)$. In $\left(Z^*_{p},\times\right)$, for $x=g^i$, define $\log_g{(x)}=i$. Then, $S^2=\left\{ S_0, S_1 \right\}$ in $\left(Z_{2(p-1)},+\right)$ is constructed as follows:
\begin{equation}
S_0= \left\{\left\{2\log_g{(x)},2\log_g{(y)}+1\right\}: x\in Z_p^*\setminus\{1\}, y\in Z_p^*\setminus\{p-1\},x-y=1\right\},
\end{equation}
and
\begin{equation}
S_1= \left\{\{2x+1,2y+1\}:\{x,y\}\in C_0^A\right\}\bigcup \left\{\{2x,2y\}:\{x,y\}\in C_0^A\right\}\bigcup \left\{\{2r,2r+1\}\right\},
\end{equation}
where $C_0^A$ is defined in Equation~(\ref{equation-even-starter1-code}) in Section~\ref{section-c-code-family}, and $r$ is the one and only \mbox{non-zero} element of $\left(Z_{p-1},+\right)$ that does not occur in $C_0^A$.
Its twin even \mbox{2-starter} in $\left(Z_{2(p-1)},+\right)$ is $\left(S^2\right)^\tau=\left\{ S'_0, S'_1 \right\}$, where
\begin{equation}
S'_0=S_1,
\end{equation}
and
\begin{equation}
S'_1= \left\{\left\{2\log_g{(x)}+1,2\log_g{(y)}\right\}: x\in Z_p^*\setminus\{1\}, y\in Z_p^*\setminus\{p-1\},x-y=1\right\}.
\end{equation}

\indent Take $\left(Z_{8},+\right)$ for example. Pick $g=2$ in $\left(Z^*_{5},\times\right)$. We then have $S^2=\left\{ S_0, S_1 \right\}$, where $S_0=\left\{ \{2,1\}, \{6,3\}, \{4,7\} \right\}$, and $S_1=\left\{ \{2,4\}, \{3,5\}, \{6,7\} \right\}$. Its twin even \mbox{2-starter} in $\left(Z_{8},+\right)$ is $\left(S^2\right)^\tau=\left\{ S'_0, S'_1 \right\}$, where $S'_0=\left\{ \{2,4\}, \{3,5\}, \{6,7\} \right\}$, and $S'_1=\left\{ \{3,0\}, \{7,2\}, \{5,6\} \right\}$.\\
\indent It can be verified that this family of even \mbox{2-starter} $S^2$ and its twin even \mbox{2-starter} $\left(S^2\right)^\tau$ in $\left(Z_{2(p-1)},+\right)$ can induce the same \mbox{2-quasi-bipyramidal} P1F of $K_{2p}$, which is isomorphic to the \mbox{well-known} P1F $GA_{2p}$ of $K_{2p}$ \cite{P1F-2-families}. Thus, two families of $\mathbb{C}_{2(p-1)}^2$ instances can be constructed using $S^2$ and $\left(S^2\right)^\tau$, respectively.\\
\indent Here, the family of $\mathbb{C}_{2(p-1)}^2$ instances constructed using $S^2$ can be shown to be the same as those constructed in \cite{cyclic-code}. Besides, it was proved in \cite{P1F-2-k2p} that the P1F $GN_{2p}$ of $K_{2p}$, which was adopted in \cite{B-Code} to construct the family of \mbox{non-cyclic} \mbox{B-Codes} of length $2(p-1)$, is also isomorphic to $GA_{2p}$. Thus, we can make two observations as follows:
\begin{enumerate}
  \item \mbox{Non-cyclic} \mbox{B-Codes} of length $2(p-1)$ constructed in \cite{B-Code} can always be transformed to $\mathbb{C}_{2(p-1)}^2$ instances; and
  \item The family of $\mathbb{C}_{2(p-1)}^2$ instances constructed in \cite{cyclic-code} can also be constructed using $S^2$.
\end{enumerate}
\section{Conclusions and Remarks}
This paper investigated the underlying connections between \mbox{distance-3} cyclic (or \mbox{quasi-cyclic}) \mbox{lowest-density} MDS array codes and starters in group theory. Some interesting new results listed as follows were obtained:
\begin{enumerate}
  \item Each cyclic code of length $2n$ can be constructed using an even starter in $(Z_{2n},+)$ (see Section~\ref{section-c-code}), while each \mbox{quasi-cyclic} code of length $2n$ can be constructed using an even \mbox{multi-starter} in $(Z_{2n},+)$ (see Section~\ref{section-quasi-c-code});
  \item Each cyclic (or \mbox{quasi-cyclic}) code has a twin cyclic (or \mbox{quasi-cyclic}) code (see Sections~\ref{section-c-code}~and~\ref{section-quasi-c-code});
  \item A cyclic code exists for most but not all of even lengths (one exception is 8) (see Section~\ref{section-c-code});
  \item Four infinite families of cyclic codes of length $p-1$ (which cover the family of cyclic codes of length $p-1$ constructed in \cite{cyclic-code}) were constructed from two infinite families of even starters in $\left(Z^*_{p},\times\right)$ (where $p$ is a prime) in Section~\ref{section-c-code-family};
  \item Besides the family of length $p-1$, cyclic codes for some sporadic even lengths listed as follows were obtained in Section~\ref{section-c-code}: \[14, 20, 24, 26, 32, 34, 38, 50;\]
  \item Two infinite families of \mbox{quasi-cyclic} codes of length $2(p-1)$ (which cover the family of \mbox{quasi-cyclic} codes of length $2(p-1)$ constructed in \cite{cyclic-code}) were constructed using an infinite family of even \mbox{2-starter} in $\left(Z_{2(p-1)},+\right)$ in Section~\ref{section-c-code-family-quasi}; and
  \item \mbox{Non-cyclic} \mbox{B-Codes} of length $p-1$ constructed in \cite{P-Code}, \cite{B-Code}, and \cite{ZZS} can always be transformed to cyclic codes (see Section~\ref{section-c-code-family}), while  \mbox{non-cyclic} \mbox{B-Codes} of length $2(p-1)$ constructed in \cite{B-Code} can always be transformed to \mbox{quasi-cyclic} codes (see Section~\ref{section-c-code-family-quasi}).
\end{enumerate}

\begin{table}[H]
  \caption{The existence of \mbox{distance-3} cyclic (or \mbox{quasi-cyclic}) \mbox{lowest-density} MDS array codes for even lengths from 4 to 58.}\label{table-result-total}
  \begin{minipage}[t]{1\textwidth}
  \begin{center}
  \begin{tabular}{|c||c|c|c|c|c|c|c|c|c|c|c|c|c|c|}
    \hline
    Length & 4 & 6 & 8 & 10 & 12 & 14 & 16 & 18 & 20 & 22 & 24 & 26 & 28 & 30 \\
    \hline
    Cyclic  & $\surd$  & $\surd$  & $\times$ & $\surd$ & $\surd$ & $\surd$ & $\surd$  & $\surd$ & $\surd$ & $\surd$ & $\surd$ & $\surd$ & $\surd$ & $\surd$ \\
    \hline
    \mbox{Quasi-Cyclic}  &    &    &   $\surd$ &  &   $\surd$  &    &   &    &  $\surd$  &   & $\surd$  &   &   &  \\
    \hline
    \hline
    Length & 32 & 34 & 36 & 38 & 40 & 42 & 44 & 46 & 48 & 50 & 52 & 54 & 56 & 58\\
    \hline
    Cyclic   & $\surd$ & $\surd$ & $\surd$ & $\surd$ & $\surd$ & $\surd$ & ?  &  $\surd$  &  ?  & $\surd$  &  $\surd$ &  ? & ?  &  $\surd$ \\
    \hline
    \mbox{Quasi-Cyclic}   &  $\surd$  &    &  $\surd$ &   &   &   &  $\surd$  &    &    &   &   &   &  $\surd$ &  \\
    \hline
  \end{tabular}\\[2pt]
  \end{center}
  \footnotesize $\surd$: existence; $\times$: inexistence; ?: unknown.
  \end{minipage}
\end{table}

\indent According to the above results, we can obtain Table~\ref{table-result-total}. From this table, we can see that for even lengths from 4 to 58, there are one length 8, for which the cyclic code does not exist, and four lengths 44, 48, 54, and 56, for which cyclic codes are still unknown. Luckily, \mbox{quasi-cyclic} codes for lengths 8, 44, and 56 can be constructed in Section~\ref{section-c-code-family-quasi}. Then, the constructions of cyclic (or \mbox{quasi-cyclic}) codes for the rest two lengths 48 and 54 are left as open problems. Here, two points deserve future researchers' attention:
\begin{enumerate}
  \item \mbox{Non-cyclic} \mbox{B-Codes} of length 48 can be constructed using P1Fs of $K_{50}$ found in \cite{P1F-K50-starter}. However, these P1Fs were induced by starters in $(Z_{49},+)$. Whether these \mbox{non-cyclic} \mbox{B-Codes} can be transformed to cyclic codes is left as an open problem.
  \item Since 2009, when a P1F of $K_{52}$ was found (see \cite{P1F-K52-even-starter}), $K_{56}$ has been the smallest complete graph for which a P1F has not been known. The construction of a P1F of $K_{56}$ is left as an open problem in graph theory. Consequently, the construction of a \mbox{B-Code} of length 54 is still unknown.
\end{enumerate}


\appendix
\section{Proof of Theorem~\ref{theorem-equivalence}}\label{appendix1}
We prove this theorem by two algorithms. Here, note that the basic idea comes from the work of \cite{B-Code}, and a similar proof was given in \cite{equivalence}.\\
\indent We now first propose Algorithm~\ref{alg1} to construct a $\mathbb{C}_{2n}$ instance from a bipyramidal P1F of a \mbox{$2n$-regular} graph on $2n+2$ vertices. In this algorithm, it is clear that $\left\{C_0,C_1,\cdots,C_{2n-1}\right\}$ meets Equation~(\ref{equation-cyclic}) in Section~\ref{subsection-description}. It can also be proved as follows that $\left\{C_0,C_1,\cdots,C_{2n-1}\right\}$ meets Condition~\ref{condition} in Section~\ref{subsection-description}. Thus, a corresponding $\mathbb{C}_{2n}$ instance is constructed by Algorithm~\ref{alg1}.
\begin{algorithm}[H]
\caption{Constructing a $\mathbb{C}_{2n}$ instance from a bipyramidal P1F of a \mbox{$2n$-regular} graph on $2n+2$ vertices.}
\label{alg1}
\begin{description}
    \item[(S1)] Choose arbitrary pair of vertices that are not adjacent to each other in the regular graph and label them with $\infty_1$ and $\infty_2$. Then, label the other $2n$ vertices of the regular graph with integers from 0 to $2n-1$.
    \item[(S2)] If a bipyramidal P1F exists for the regular graph, then let $F_i$ denote the \mbox{one-factor} that contains the edge $\{i,\infty_1\}$, where $i=0,1,\ldots,2n-1$.
    \item[(S3)] In each $F_i$, delete the two edges that are incident to the two vertices $\infty_1$ and $\infty_2$. Then, delete the two vertices $\infty_1$ and $\infty_2$ in the graph. For $i=0,1,\ldots,2n-1$, let $C_i=F_i\setminus \left\{ \{i,\infty_1\}, \{r_i,\infty_2\} \right\}$, where $r_i$ is the vertex that is adjacent to the vertex $\infty_2$ originally in $F_i$, and label all the edges in $C_i$ with $i$.
\end{description}
\end{algorithm}

\indent According to Definition~\ref{definition-P1F} in Section~\ref{section-introduction}, in a P1F, for any pair of \mbox{one-factors} $F_{i_1}$ and $F_{i_2}$, the union of them forms a Hamiltonian cycle. Then, in the union of $F_{i_1}$ and $F_{i_2}$, after we delete all the edges that are incident to the two vertices $\infty_1$ and $\infty_2$, no cycle can exist. In addition, there also does not exist a path whose terminal vertices are the two vertices $i_1$ and $i_2$, otherwise the union of the path and the two edges $\{i_1,\infty_1\}$ (contained in $F_{i_1}$) and $\{i_2,\infty_1\}$ (contained in $F_{i_2}$) can form a cycle that does not visit the vertex $\infty_2$, which conflicts with the fact that the union of the two \mbox{one-factors} $F_{i_1}$ and $F_{i_2}$ forms a Hamiltonian cycle. Thus, $\left\{C_0,C_1,\cdots,C_{2n-1}\right\}$ in Algorithm~\ref{alg1} meets Condition~\ref{condition} in Section~\ref{subsection-description}. \\
\indent To make Algorithm~\ref{alg1} more easily understood, we give an example of constructing the $\mathbb{C}_{4}$ instance in Section~\ref{subsection-definition} from a bipyramidal P1F of a \mbox{4-regular} graph on 6 vertices in Figure~\ref{fig-construction}. \\
\indent Then, the next natural question is: Can we get a bipyramidal P1F of a \mbox{$2n$-regular} graph on a set of $2n+2$ vertices from a $\mathbb{C}_{2n}$ instance? A positive answer to this question will be given by Algorithm~\ref{alg2}.
\begin{algorithm}[H]
\caption{Constructing a bipyramidal P1F of a \mbox{$2n$-regular} graph on $2n+2$ vertices from a $\mathbb{C}_{2n}$ instance.}
\label{alg2}
\begin{description}
    \item[(S1)] If a $\mathbb{C}_{2n}$ instance exists, describe the code using the graph representation mentioned in Section~\ref{subsection-description} and let $C_i$ represent the \mbox{$i$-th} column of the code in the graph description, where $i=0,1,\ldots,2n-1$.
    \item[(S2)] Add two vertices $\infty_1$ and $\infty_2$ to the \mbox{$(2n-2)$-regular} graph $G$ of vertices $0,1,\ldots,2n-1$.
    \item[(S3)] For $i=0,1,\ldots,2n-1$, add two edges $\{i,\infty_1\}$ and $\{r_i,\infty_2\}$ to $C_i$, where $r_i$ is an integer from 0 to $2n-1$ such that the expanded set $\overset{\sim}{C}_i$ is a \mbox{one-factor} of the expanded graph $\overset{\sim}{G}$ of vertices $0,1,\cdots,2n-1,\infty_1,\infty_2$.
\end{description}
\end{algorithm}

\indent In Algorithm~\ref{alg2}, for $i=0,1,\cdots,2n-1$, $\overset{\sim}{C}_i$ has the following form:
\begin{equation}
    \overset{\sim}{C}_i=C_i\cup\left\{ \{i,\infty_1\}, \{r_i,\infty_2\} \right\}.
\end{equation}

\indent It is clear that the new graph $\overset{\sim}{G}$ is a \mbox{$2n$-regular} graph on $2n+2$ vertices.\\
\indent For $i=0,1,\cdots,2n-1$, define $\infty_1+i=\infty_1$ and $\infty_2+i=\infty_2$. Then, for $i=0,1,\cdots,2n-1$, we have
\begin{equation}\label{equation-expanded-cyclic}
    \overset{\sim}{C}_{i}=\left\{ \{x+i\bmod{2n},y+i\bmod{2n}\}:\{x,y\}\in \overset{\sim}{C}_0\right\}.
\end{equation}

\indent Consequently,
\begin{equation}\label{equation-P1F}
\mathbb{F}=\left\{ \overset{\sim}{C}_0,\overset{\sim}{C}_1,\cdots,\overset{\sim}{C}_{2n-1} \right\}
\end{equation}
is a bipyramidal \mbox{one-factorization} of a \mbox{$2n$-regular} graph on $2n+2$ vertices.\\
\indent Take the $\mathbb{C}_4$ instance in Section~\ref{subsection-definition} for example. Figure~\ref{fig-expanded-C4-graph} shows the corresponding expanded graph $\overset{\sim}{G}$, which is a \mbox{4-regular} graph on a set of 6 vertices $\{0,1,2,3,\infty_1,\infty_2\}$. The four corresponding expanded sets are
\[
\left\{
  \begin{array}{l}
    \overset{\sim}{C}_0=\left\{ \{1,2\}, \{0,\infty_1\}, \{3,\infty_2\} \right\}; \\
    \overset{\sim}{C}_1=\left\{ \{2,3\}, \{1,\infty_1\}, \{0,\infty_2\} \right\}; \\
    \overset{\sim}{C}_2=\left\{ \{3,0\}, \{2,\infty_1\}, \{1,\infty_2\} \right\}; \\
    \overset{\sim}{C}_3=\left\{ \{0,1\}, \{3,\infty_1\}, \{2,\infty_2\} \right\}.
  \end{array}
\right.
\]
It is clear that \[\mathbb{F}=\left\{ \overset{\sim}{C}_0,\overset{\sim}{C}_1,\overset{\sim}{C}_2,\overset{\sim}{C}_{3} \right\}\] is a bipyramidal \mbox{one-factorization} of a \mbox{4-regular} graph on 6 vertices.\\
\indent We then prove that the bipyramidal \mbox{one-factorization} obtained in Algorithm~\ref{alg2} is perfect as follows:\\
\indent From Condition~\ref{condition} in Section~\ref{subsection-description}, we can deduce that for any $m$ and $k$ (where $0\leq m<k\leq 2n-1$), the subgraph \[G^*=(\{0,1,\cdots,2n-1\},C_m\cup C_k)\] can be in one of the following two forms:

\begin{enumerate}
    \item $G^*$ consists of an isolated vertex $m$ (or $k$) and a path of length $2n-2$ one of whose terminal vertices is the other vertex $k$ (or $m$); or
    \item $G^*$ consists of two paths that satisfy: \rmnum{1}) the sum of their length is $2n-2$, and \rmnum{2}) one of the terminal vertices of each path is the vertex $m$ or $k$.
\end{enumerate}

\indent Then, in the \mbox{one-factorization} constructed in Algorithm~\ref{alg2}, the union of any pair of \mbox{one-factors} forms a Hamiltonian cycle. Thus, according to Definition~\ref{definition-P1F} in Section~\ref{section-introduction}, the \mbox{one-factorization} obtained in Algorithm~\ref{alg2} is a bipyramidal P1F of a \mbox{$2n$-regular} graph on $2n+2$ vertices.

\section{Proof of Theorem~\ref{theorem-bipyramidal-1F-even-starter}}\label{appendix2}
It is clear that $S$ consists of $n-1$ pairs of \mbox{non-zero} elements in $(Z_{2n},+)$. We then prove by contradiction that for every element $i\in Z_{2n}$ such that $i\neq 0$ and $i\neq n$, there exists a pair $\{x,y\}\in S_E$ such that $i=(x-y) \bmod{2n}$, or $i=(y-x) \bmod{2n}$.\\
\indent We first consider the first opposite case where for $i=n$, there exists a pair $\{x,y\}\in S_E$ such that $i=(x-y) \bmod{2n}$, or $i=(y-x) \bmod{2n}$. Suppose the corresponding pair is $\{x^*,y^*\}$, i.e. \[y^*-x^*=x^*-y^*=n.\] Then, we have \[\{x^*+n,y^*+n\}=\{x^*,y^*\}.\] $\mathbb{F}$ is a bipyramidal \mbox{one-factorization} in which \[F_n=\left\{ \{x+n,y+n\}:\{x,y\}\in F_0\right\}.\] Consequently, $\{x^*,y^*\}$ is contained in both $F_0$ and $F_n$ --- a contradiction!\\
\indent Now, under the condition $i\neq n$, we then consider the second opposite case where there is a \mbox{non-zero} element $i\in Z_{2n}$ such that there does not exist a pair $\{x,y\}\in S_E$ such that $i=(x-y) \bmod{2n}$, or $i=(y-x) \bmod{2n}$. Then, according to the pigeonhole principle, there exist two pairs $\{{x'}_1,{y'}_1\}$ and $\{{x'}_2,{y'}_2\}$, which meet \[{y'}_1-{x'}_1={y'}_2-{x'}_2.\] Then, we have \[{x'}_2-{x'}_1={y'}_2-{y'}_1.\] Let \[k={x'}_2-{x'}_1.\] Then, we have \[\{{x'}_1+k,{y'}_1+k\}=\{{x'}_2,{y'}_2\}.\] $\mathbb{F}$ is a bipyramidal \mbox{one-factorization} in which \[F_k=\left\{ \{x+k,y+k\}:\{x,y\}\in F_0\right\}.\] Consequently, $\{{x'}_2,{y'}_2\}$ is contained in both $F_0$ and $F_k$ --- a contradiction!\\
\indent Therefore, according to Definition~\ref{definition-evenstarter} in Section~\ref{section-c-code}, $S$ is an even starter in $(Z_{2n},+)$.

\bibliographystyle{spmpsci}
\bibliography{C-Code_Li}

\end{document}